\def\varplayersset{{\mathcal{K}}}
\def\varplayerssetdim{K}
\def\varAPs{{\mathcal{I}}}
\def\varAPsdim{I}
\def\varcloudcapability{F^c}
\def\varusercapability{F^0}
\def\vardecisionsset{\mathfrak{D}}
\def\vardecision{d}
\def\vardecisionsvector{\textbf{d}}
\def\varAP{i}
\def\varoAP{j}
\def\varpower{P}
\def\varchannelgain{H}
\def\varinterferencepower{\omega}
\def\varchannelbandwidth{B}
\def\varuplinkrate{R}
\def\vardatasize{D}
\def\varlocalcomputingtime{T^0}
\def\varCPUcyclesnumber{L}
\def\varlocalcomputingenergy{E^0}
\def\varenergyconstant{v}
\def\varlocalcost{C^0}
\def\vartimeweight{\gamma^T}
\def\varenergyweight{\gamma^E}
\def\varcloudtransmissiontime{T^{c,off}}
\def\varcloudcomputingtime{T^{c,exe}}
\def\varcloudtransmissionenergy{E^c}
\def\varcloudcost{C^c}
\def\varindicatorfunction{I}
\def\vartreshold{M}
\def\varcostfunction{C}
\def\varoplayer{{k^{\prime}}}
\def\varplayer{k}
\def\varnumberofusers{n}
\def\vargamestage{t}
\def\varthegame{\Gamma}
\def\varoffloaders{O}
\def\vardeviatorsOtoL{D_{O\rightarrow L}}
\def\vardeviatorsLtoO{D_{L\rightarrow O}}
\def\vardeviatorsOtoO{D_{O\rightarrow O}}
\def\varreluctanceratio{\rho}
\def\varcostvector{\gamma}
\def\argmax{\mathop{\rm arg\,max}}
\def\argmin{\mathop{\rm arg\,min}}
\algrenewcommand\algorithmicindent{0.7em}
\newtheorem{theorem}{Theorem}
\newtheorem{lemma}{Lemma}
\newtheorem{corollary}{Corollary}
\newtheorem{proposition}[theorem]{Proposition}
\newtheorem{example}{Example}
\newtheorem*{example*}{Example}
\theoremstyle{definition}
\newtheorem{definition}{Definition}
\newcommand*\ruleline[1]{\par\noindent\raisebox{.8ex}{\makebox[\linewidth]{\hrulefill\hspace{1ex}\raisebox{-.8ex}{#1}
\hspace{1ex}\hrulefill}}}
\title{Selfish Computation Offloading for Mobile Cloud Computing in Dense Wireless Networks}
 \author{Sla\dj ana Jo\v{s}ilo and Gy\"orgy D\'an\\
 ACCESS Linnaeus Center, School of Electrical Engineering\\
 KTH, Royal Institute of Technology, Stockholm, Sweden
 E-mail: \{josilo, gyuri\}@kth.se 
 }
\begin{document}

\setlength{\abovedisplayskip}{0.18cm}
\setlength{\belowdisplayskip}{0.18cm}

\maketitle
\thispagestyle{plain}
\pagestyle{plain}

\begin{abstract}
Offloading computation to a mobile cloud is a promising solution to augment the computation capabilities of mobile devices. In this paper we consider selfish 
mobile devices in a dense wireless network, in which individual mobile devices can offload computations via multiple access points (APs) to a mobile cloud so 
as to minimize their computation costs, and we provide a game theoretical analysis of the problem. We show that in the case of an elastic cloud, all improvement
paths are finite, and thus a pure strategy Nash equilibrium exists and can be computed easily. In the case of a non-elastic cloud we show that improvement paths may cycle, yet we show that 
a pure Nash equilibrium exists and we provide an efficient algorithm for computing one. Furthermore, we provide an upper bound on the price of anarchy (PoA) of the game. 
We use simulations to evaluate the time complexity of computing Nash equilibria and to provide insights into the PoA under realistic scenarios. Our results show that
the equilibrium cost may be close to optimal, and the cost difference is due to too many mobile users offloading simultaneously.

\end{abstract}

\section{Introduction}
Mobile handsets are increasingly used for various computationally intensive applications,
including augmented reality, natural language processing, face, gesture and object recognition,
and various forms of user profiling for recommendations~\cite{Hakkarainen2008Ismar,Liu2009Percom}. Executing such computationally intensive applications
on mobile handsets may result in slow response times, and can also be detrimental to battery life, which may limit user acceptance. 

Mobile cloud computing has emerged as a promising solution to serve the computational
needs of these computationally intenstive applications, while potentially relieving the battery of the mobile handsets~\cite{Cuervo2010MobiSys,Wen12Infocom}.
In the case of mobile cloud computing the mobile devices offload the computations via a wireless network to a cloud infrastructure,
where the computations are performed, and the result is sent back to the mobile handset. While computation offloading
to general purpose cloud infrastructures, such as Amazon EC2, may not be able to provide sufficiently low response times
for many applications, emerging mobile edge computing resources may provide sufficient computational power close
to the network edge to meet all application requirements~\cite{ETSI2015mec}.

Computation offloading to a mobile edge cloud can significantly increase the computational capability
of individual mobile handsets, but the response times may suffer when many handsets attempt to offload
computations to the cloud simultaneously, on the one hand due to the competition for possibly constrained
edge cloud resources, on the other hand due to contention in the wireless access~\cite{Barbera2013Infocom,Chen2015tpds}. 
The problem is even more
complex in the case of a dense deployment of access points, e.g., cellular femtocells or WiFi access points,
when each mobile user can choose among several access points to connect to. Good system performance
in this case requires the coordination of the offloading choices of the indvidual mobile handsets, while respecting
their individual performance objectives, both in terms of response time and energy consumption.

In this paper we consider the problem of resource allocation for computation offloading by self-interested mobile users
to a mobile cloud. The objective of each mobile user is to minimize a
linear combination of its response time and its energy consumption for performing a computational task, by choosing whether or not
to offload through one of many access points. Clearly, the choice of a mobile user affects the cost of other mobile users. If too many
mobile users choose offloading through a particular access point then they will achieve low transmission rate. A low transmission rate would
lead to high data transmission time and a corresponding high energy consumption. 
In order to capture the interactions between the choices of the mobile users, in this paper we formulate the computation offloading problem as a non-cooperative game,
and address the existence of self-enforcing resource allocations, i.e., equilibrium allocations, and their computation. 

Our contibutions in this paper are threefold. First, we show that if the cloud computing resources scale with the number of mobile users then equilibrium allocations
always exist, and we provide a simple algorithm for computing an equilibrium. Second, we show that if the cloud computing resources do not scale with the number
of mobile users then the same algorithm cannot be used for computing an equilibrium as it may cycle infinitely, but we prove that equilibria exist, and we
provide an algorithm with quadratic complexity in the number of mobile users for computing an equilibrium.
Finally, we provide a bound on the price of anarchy for both models of cloud resources. We provide numerical
results based on extensive simulations to illustrate the computational efficiency of the algorithms and to evaluate the price of anarchy for scenarios of practical interest.

The rest of the paper is organized as follows. We present the system model in Section~\ref{sec:model}. We prove equilibrium existence and computability
results for the elastic cloud and non-elastic cloud in Sections~\ref{sec::elastic} and~\ref{sec::non-elastic}, respectively. We provide a bound
on the price of anarchy in Section~\ref{sec::poa} and present numerical
results in Section~\ref{sec::numerical}. Section~\ref{sec::related} discusses related work and Section~\ref{sec::conclusion} concludes the paper.

\section{System Model and Problem Formulation}
\label{sec:model}
We consider a mobile cloud computing system that serves a set $\varplayersset\!\! = \!\!\{1,2,...,\varplayerssetdim\}$ of colocated mobile users (MU). 
Each MU has a computationally intensive task to perform, and can decide whether to perform the task  locally or to offload the computation to a cloud server.
The computational task is characterized by the size $\vardatasize_\varplayer$ of the input data (e.g., in bytes),
and by the number $\varCPUcyclesnumber_\varplayer$ of the CPU cycles required to perform the computation.
To enable a meaningful analysis, we make the common assumption that the set of MUs does not change during computation offloading, i.e., in
the order of seconds~\cite{Wen12Infocom,Yang:2013:FPE:2479942.2479946,Sardellitti2015tsipn,Iosifidis2013WiOpt}. 
\subsection{Communication model}
If the MU decides to offload the computation to the cloud server, it has to transmit $\vardatasize_\varplayer$ amount of data pertaining
to its task to the cloud server through one of a set of access points (APs)
denoted by $\varAPs \!\!=\!\!  \{1,2,...,\varAPsdim\}$.
Thus, together with local computing MU $\varplayer$ can choose an action from the set  $\vardecisionsset_\varplayer \!\!=\!\! \{0,1,2,...,\varAPsdim\}$,
where $0$ corresponds to local computing, i.e., no offloading.
We denote by $\vardecision_\varplayer\!\in\!\vardecisionsset_\varplayer$ the decision of MU $\varplayer$, and refer to it as her strategy.
We refer to the collection $\vardecisionsvector\!=\!(\vardecision_\varplayer)_{\varplayer\in\varplayersset}$ as a strategy profile, and we denote
by $\vardecisionsset\!=\!\times_{\varplayer \in \varplayersset}\vardecisionsset_\varplayer$ the set of all feasible strategy profiles.

We denote by $\varchannelbandwidth_\varAP$ the bandwidth of AP $\varAP$, and
for a strategy profile $\vardecisionsvector$ we denote by $\varnumberofusers_\varAP(\vardecisionsvector)$ the number of MUs that use AP $\varAP$ for computation offloading,
and by $\varnumberofusers(\vardecisionsvector)\!\!=\!\!\sum_{\varAP\in\varAPs}\varnumberofusers_\varAP(\vardecisionsvector)$ the number of MUs that offload.
Similarily, for an AP $\varAP\in\varAPs$ we denote by $\varoffloaders_\varAP(\vardecisionsvector)=\{\varplayer|\vardecision_\varplayer=\varAP\}$ the set of MUs that
  offload using AP $\varAP$, and we define the set of offloaders as $\varoffloaders(\vardecisionsvector)=\cup_{\varAP\in\varAPs}\varoffloaders_\varAP(\vardecisionsvector)$.
We consider that the bandwidth $\varchannelbandwidth_\varAP$ of AP $\varAP$ is divided equally among the users that are connecting to it, i.e., the uplink rate
 $\varuplinkrate_{\varplayer}^{\varAP}(\vardecisionsvector)$ of MU $\varplayer$ is given by
 \begin{equation}
  \varuplinkrate_{\varplayer}^{\varAP}(\vardecisionsvector)=\frac{\varchannelbandwidth_\varAP}{\varnumberofusers_\varAP(\vardecisionsvector)}.
 \end{equation}
 The model of equal bandwidth sharing is reasonable if MUs are colocated, or if the APs implement
 fair uplink bandwidth allocation~\cite{Wong2009TWC,Cicalo2015eucnc}. 

 The uplink rate $\varuplinkrate_{\varplayer}^{\varAP}(\vardecisionsvector)$ together with the input data size $\vardatasize_\varplayer$  determines the transmission time $\varcloudtransmissiontime_{\varplayer,\varAP}(\vardecisionsvector)$
 of MU $\varplayer$ for offloading 
 via AP $\varAP$,   
 \begin{equation}
  \varcloudtransmissiontime_{\varplayer,\varAP}(\vardecisionsvector)=\frac{\vardatasize_{\varplayer}}{\varuplinkrate_{\varplayer}^{\varAP}(\vardecisionsvector)}.
 \end{equation}
 To model the energy consumption of the MUs, we assume that MU $\varplayer$ uses a constant transmit power of $\varpower_{\varplayer}$ for sending the data, thus the energy consumption of MU $\varplayer$ for
 offloading the input data of size $\vardatasize_{\varplayer}$ via AP $\varAP$ is\\
 \begin{equation}
  \varcloudtransmissionenergy_{\varplayer,\varAP}(\vardecisionsvector)=\frac{\vardatasize_{\varplayer}\varpower_{\varplayer}}{\varuplinkrate_{\varplayer}^{\varAP}(\vardecisionsvector)}.
 \end{equation}
\begin{figure}[tb]
\vspace{-0.2cm}
 \begin{center}
  \includegraphics[width=\columnwidth]{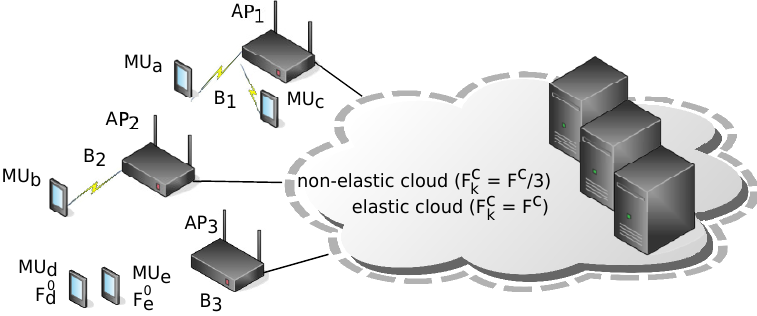}
  \caption{An example of a mobile cloud computing system}
  \label{fig:model}
  \end{center}
  \vspace{-1cm}
\hspace{0.015\textwidth}
\end{figure}
\subsection{Computation model}
 In what follows we introduce our model of the time and energy consumption of performing the computation locally and in the cloud server.

\subsubsection{Local computing}
In the case of local computing data need not be transmitted, but the task has to be processed using local
computing power. We denote by  $\varusercapability_\varplayer$ the computational capability of MU $\varplayer$,
and express the time it takes for MU $\varplayer$ to perform the computation task
 $<\!\!\vardatasize_\varplayer,\varCPUcyclesnumber_\varplayer\!\!>$ locally by
 \begin{equation}
  \varlocalcomputingtime_{\varplayer} = \frac{\varCPUcyclesnumber_{\varplayer}}{\varusercapability_{\varplayer}}.
 \end{equation}
In order to model the energy consumption of local computing we 
 denote by $\varenergyconstant_\varplayer$  the consumed energy per CPU cycle, thus we obtain
 \begin{equation}
  \varlocalcomputingenergy_{\varplayer} = \varenergyconstant_{\varplayer}\varCPUcyclesnumber_{\varplayer}.
 \end{equation}
 \subsubsection{Cloud computing}
 In the case of cloud computing, after the data are transmitted via an AP, processing is done at the cloud server.
We denote the computation capability of the cloud by $\varcloudcapability$, and by $\varcloudcapability_\varplayer$ the computation capability assigned to MU  $\varplayer$ by the cloud.
We consider two models of scaling for the computational capability of the cloud. In the \emph{elastic} model each MU that offloads receives $\varcloudcapability_\varplayer=\varcloudcapability$
amount of computing power, which is a resonable assumption for large cloud computing infrastructures.
In the \emph{non-elastic} model an MU that offloads is assigned $\varcloudcapability_\varplayer(\vardecisionsvector)=\varcloudcapability/{\varnumberofusers(\vardecisionsvector)}$ computation capability, i.e., the computing power is shared equally among all MUs that offload, which may be a reasonable model of emerging
mobile edge cloud infrastructures with limited computational power and scaling~\cite{ETSI2015mec}.

Given  $\varcloudcapability_\varplayer$ we  use a linear model to compute the execution time of a task $<\!\!\vardatasize_\varplayer,\varCPUcyclesnumber_\varplayer\!\!>$ that is offladed by MU $\varplayer$,\\
 \begin{equation}
  \varcloudcomputingtime_{\varplayer} = \frac{\varCPUcyclesnumber_{\varplayer}}{\varcloudcapability_\varplayer}.
 \end{equation}
Figure ~\ref{fig:model} shows an example of 
a mobile cloud computing system that consists of $\varAPsdim=3$ APs and $\varplayerssetdim=5$ MUs in which MUs $a$ and $c$ offload using AP 1, 
MU $b$ offloads using AP 2, and MUs $d$ and $e$ perform the local computation.  
 \subsection{Cost Model}
 We consider that the cost of an MU can be modeled as a linear combination of the time it takes to finish the computation and its energy consumption. 
 For MU $\varplayer$ we denote by $\varenergyweight_{\varplayer}$ the weight attributed to energy consumption  and by $\vartimeweight_{\varplayer}$ the weight
 attributed to the time it takes to finish the computation, $0 \leq \varenergyweight_\varplayer< \vartimeweight_\varplayer \leq 1$.

 Using these notation, for the case of local computing the cost of MU $\varplayer$ is determined by the local computing time and the corresponing energy consumption,
\begin{equation}\label{eq:local_cost_2}
  \varlocalcost_{\varplayer} = \vartimeweight_{\varplayer}\varlocalcomputingtime_{\varplayer} + \varenergyweight_{\varplayer}\varlocalcomputingenergy_{\varplayer} = (\frac{\vartimeweight_\varplayer}{\varusercapability_\varplayer}+\varenergyweight_\varplayer\varenergyconstant_\varplayer)\varCPUcyclesnumber_\varplayer.
\end{equation}

 For the case of offloading the cost is determined by the transmission time, the corresponding transmit energy, and the computing time in the cloud,
\begin{eqnarray}
  \varcloudcost_{\varplayer,\varAP}(\vardecisionsvector) &=& \vartimeweight_{\varplayer}(\varcloudcomputingtime_{\varplayer}+\varcloudtransmissiontime_{\varplayer,\varAP}(\vardecisionsvector)) + \varenergyweight_{\varplayer}\varcloudtransmissionenergy_{\varplayer,\varAP}(\vardecisionsvector)\nonumber\\
  &=&(\vartimeweight_\varplayer+\varenergyweight_\varplayer\varpower_\varplayer)\frac{\vardatasize_\varplayer}{\varuplinkrate_{\varplayer}^{\varAP}(\vardecisionsvector)} + \vartimeweight_\varplayer\frac{\varCPUcyclesnumber_\varplayer}{\varcloudcapability_\varplayer}.\label{eq:offloading_cost_2}
\end{eqnarray}
Similar to previous works~\cite{Chen2015tpds,Huang2012twc,Kumar2010Computer}, we do not model the time needed to transmit the results of the computation from the cloud server to the MU, as for typical applications like face and speech recognition, the size of the result of the computation is much smaller than $\vardatasize_\varplayer$.

For notational convenience let us define the indicator function $\varindicatorfunction(\vardecision_{\varplayer},\varAP)$ for MU $\varplayer$ as 
 \begin{equation}\label{eq:indicator_function}
  \varindicatorfunction(\vardecision_{\varplayer},\varAP)\!=\!
  \left\{\!\!\! \begin{array}{ll}
  1,& \mbox{ if }
  \vardecision_{\varplayer}=\varAP
  \\
  0,&
  \mbox{ otherwise.}\end{array} \right.
 \end{equation}
We can then express the cost of MU $\varplayer$ in strategy profile $\vardecisionsvector$ as
 \begin{equation}\label{eq:cost_function}
  \varcostfunction_{\varplayer}(\vardecisionsvector)=
    \varlocalcost_{\varplayer} \varindicatorfunction(\vardecision_{\varplayer},0)
   +
  \sum_{\varAP\in\varAPs} {\varcloudcost_{\varplayer,\varAP}(\vardecisionsvector)\varindicatorfunction(\vardecision_{\varplayer},\varAP)}.  
 \end{equation}
\subsection{Computation Offloading Game}
We consider that the objective of each MU is to minimize its cost (\ref{eq:cost_function}), i.e., to find a strategy 
\begin{equation}
  \vardecision^*_\varplayer \in \argmin_{\vardecision_\varplayer\in\vardecisionsset_\varplayer}\varcostfunction_{\varplayer}(\vardecision_\varplayer,\vardecision_{-\varplayer} ),
  \label{eq::best-reply}
\end{equation}
where we use $\vardecision_{-\varplayer}$ to denote the strategies of all MUs except MU $\varplayer$. Clearly, the strategy of an MU influences the cost of the other MUs, and thus we can model the problem as a strategic game $\varthegame=<\varplayersset, (\vardecisionsset_\varplayer)_\varplayer, (\varcostfunction_{\varplayer})_\varplayer>$, in which the players are the MUs. We refer to the game as the \emph{computation offloading game}.
We are interested in whether cost minimizing MUs can reach a strategy profile in which no MU can further decrease her cost through changing her strategy, i.e., a Nash equilibrium of the game $\varthegame$.
\begin{definition}
  A Nash equilibrium (NE) of the strategic game $<\!\!\varplayersset, (\vardecisionsset_\varplayer)_\varplayer, (\varcostfunction_{\varplayer})_\varplayer\!\!>$ is a strategy profile $\vardecision^*$ such that
  $$
\varcostfunction_{\varplayer}(\vardecision^*_\varplayer,\vardecision^*_{-\varplayer} )\leq \varcostfunction_{\varplayer}(\vardecision_\varplayer,\vardecision^*_{-\varplayer} ).
  $$
\end{definition}
Given a strategy profile $(\vardecision_\varplayer,\vardecision_{-\varplayer})$ we say that strategy $\vardecision^\prime_\varplayer$ is an improvement step for MU $\varplayer$ if $\varcostfunction_{\varplayer}(\vardecision^\prime_\varplayer,\vardecision_{-\varplayer}) < \varcostfunction_{\varplayer}(\vardecision_\varplayer,\vardecision_{-\varplayer} )$. We call a sequence of improvement steps in which one MU changes her strategy at a time an \emph{improvement path}.
Furthermore, we say that a strategy $\vardecision^*_\varplayer$ is a best reply to $ \vardecision_{-\varplayer}$ if it solves (\ref{eq::best-reply}), and we call an improvement path in which all improvement steps are best reply a \emph{best improvement path}. Observe that in a NE all MUs play their best replies to each others' strategies. 

In the rest of the paper we investigate whether NE exist for the \emph{elastic} and for the \emph{non-elastic} cloud model, and whether the MUs can compute a NE  efficiently using distributed algorithms.

\section{Equilibria in case of an Elastic Cloud}
\label{sec::elastic} 
Recall that under the \emph{elastic} cloud model the cloud computation 
capability assigned to user $\varplayer$ is independent of the other players' strategies, $\varcloudcapability_\varplayer=\varcloudcapability$. 
Thus, the cost function in the case of offloading can be expressed as
\begin{equation}\label{eq:offloading_cost_2ee}
  \varcloudcost_{\varplayer,\varAP}(\vardecisionsvector)=(\vartimeweight_\varplayer+\varenergyweight_\varplayer\varpower_\varplayer)\vardatasize_\varplayer\frac{\varnumberofusers_\varAP(\vardecisionsvector)}{\varchannelbandwidth_\varAP} + 
  \vartimeweight_\varplayer\frac{\varCPUcyclesnumber_\varplayer}{\varcloudcapability}.
\end{equation}

We start with formulating an insightful structural result about the best responses of the MUs, which we will use later to prove the existence of NE.
\begin{lemma}
 Given the strategy profile $\vardecision_{-\varplayer}$ of the MUs other than $\varplayer$ in the computation offloading game with elastic cloud, 
 a best reply $\vardecision_{\varplayer}^{*}$ of user $\varplayer$ satisfies the following threshold strategy\\
 \begin{equation}\label{eq:best_reply}
  \vardecision_{\varplayer}^{*}=\
  \left\{\!\!\! \begin{array}{ll}
  0,& \!\!\mbox{ if }
  \vartreshold_{\varplayer} \leq  \frac{\varnumberofusers_\varAP(\varAP,\vardecision_{-\varplayer})}{\varchannelbandwidth_\varAP} \hspace{0.1cm} \mbox{for} \hspace{0.1cm} \forall \varAP \in \varAPs
  \\
  \varAP,&
  \!\!\mbox{ if } \frac{\varnumberofusers_\varAP(\varAP,\vardecision_{-\varplayer})}{\varchannelbandwidth_\varAP} \leq \min\bigg\{\vartreshold_{\varplayer},\min\limits_{ \varoAP \in \varAPs \setminus \{\varAP\}} 
  \frac{\varnumberofusers_\varoAP(\varoAP,\vardecision_{-\varplayer})}{\varchannelbandwidth_\varoAP}\bigg\}\\
  \end{array} \right.
 \end{equation}
 where
 \begin{equation}\nonumber
  \vartreshold_{\varplayer}=\frac{\varenergyweight_\varplayer\varenergyconstant_\varplayer+\vartimeweight_\varplayer(\frac{1}{\varusercapability_\varplayer}-\frac{1}{\varcloudcapability})}{\vartimeweight_\varplayer+
  \varenergyweight_\varplayer\varpower_\varplayer}\cdot\frac{\varCPUcyclesnumber_\varplayer}{\vardatasize_\varplayer}.
 \end{equation}
\end{lemma}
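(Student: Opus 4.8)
The plan is to compare directly the three kinds of cost that MU $\varplayer$ can incur for a fixed $\vardecision_{-\varplayer}$ — local computing and offloading through each AP — and to reduce every comparison to an inequality in the single scalar $y_\varAP := \varnumberofusers_\varAP(\varAP,\vardecision_{-\varplayer})/\varchannelbandwidth_\varAP$, the congestion seen by $\varplayer$ on AP $\varAP$ once she herself joins it. The observation that makes the elastic model tractable is that in (\ref{eq:offloading_cost_2ee}) the cloud-execution term $\vartimeweight_\varplayer\varCPUcyclesnumber_\varplayer/\varcloudcapability$ is independent of the chosen AP (every offloader is granted $\varcloudcapability$), so the only part of the offloading cost that varies across offloading options is an affine, strictly increasing function of $y_\varAP$.

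First I would rewrite the offloading cost as $\varcloudcost_{\varplayer,\varAP}(\vardecisionsvector)=(\vartimeweight_\varplayer+\varenergyweight_\varplayer\varpower_\varplayer)\vardatasize_\varplayer\,y_\varAP+\vartimeweight_\varplayer\varCPUcyclesnumber_\varplayer/\varcloudcapability$ and note that the coefficient $\vartimeweight_\varplayer+\varenergyweight_\varplayer\varpower_\varplayer>0$, since $\vartimeweight_\varplayer>0$ by the assumption $0\le\varenergyweight_\varplayer<\vartimeweight_\varplayer\le1$ and $\varpower_\varplayer\ge0$. Hence $\varcloudcost_{\varplayer,\varAP}$ is strictly increasing in $y_\varAP$, and for any two APs $\varAP,\varoAP$ the constant cloud term cancels, giving $\varcloudcost_{\varplayer,\varAP}\le\varcloudcost_{\varplayer,\varoAP}$ iff $y_\varAP\le y_\varoAP$. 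Therefore the cost-minimizing offloading strategy is any AP attaining $\min_{\varoAP\in\varAPs}y_\varoAP$, which accounts for the inner minimum $\min_{\varoAP\in\varAPs\setminus\{\varAP\}}y_\varoAP$ appearing in the case $\vardecision_\varplayer^*=\varAP$.

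Next I would carry out the one algebraic step that compares the best offloading option to local computing. Imposing $\varcloudcost_{\varplayer,\varAP}(\vardecisionsvector)\le\varlocalcost_\varplayer$ via (\ref{eq:offloading_cost_2ee}) and (\ref{eq:local_cost_2}), subtracting the common term $\vartimeweight_\varplayer\varCPUcyclesnumber_\varplayer/\varcloudcapability$, dividing by the positive factor $(\vartimeweight_\varplayer+\varenergyweight_\varplayer\varpower_\varplayer)\vardatasize_\varplayer$, and collecting the coefficients of $\varCPUcyclesnumber_\varplayer$ yields exactly $y_\varAP\le\vartreshold_\varplayer$ with the stated expression for $\vartreshold_\varplayer$. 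Thus offloading through $\varAP$ weakly beats local computing precisely when $\varnumberofusers_\varAP(\varAP,\vardecision_{-\varplayer})/\varchannelbandwidth_\varAP\le\vartreshold_\varplayer$.

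Finally I would assemble the two comparisons into the case distinction. If $\vartreshold_\varplayer\le y_\varoAP$ for every $\varoAP\in\varAPs$, then local computing weakly dominates every offloading option and $\vardecision_\varplayer^*=0$; otherwise some AP satisfies $y_\varAP<\vartreshold_\varplayer$, and a best reply is an AP $\varAP$ with $y_\varAP\le\min\{\vartreshold_\varplayer,\min_{\varoAP\in\varAPs\setminus\{\varAP\}}y_\varoAP\}$, i.e. a least-congested AP among those that beat local. I expect no genuine obstacle here; the only point requiring care is the bookkeeping that $\varnumberofusers_\varAP(\varAP,\vardecision_{-\varplayer})$ counts $\varplayer$ herself, so the quantity entering every comparison is the post-join congestion $y_\varAP$ rather than the congestion before $\varplayer$ moves, and the weak inequalities in the statement simply record that a best reply need not be unique when ties occur at the thresholds.
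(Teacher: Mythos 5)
Your proof is correct and follows essentially the same route as the paper's: both reduce every pairwise comparison (local vs.\ AP $\varAP$, and AP $\varAP$ vs.\ AP $\varoAP$) to inequalities in the post-join congestion $\varnumberofusers_\varAP(\varAP,\vardecision_{-\varplayer})/\varchannelbandwidth_\varAP$, using that the elastic-cloud execution term is AP-independent and cancels, and both obtain $\vartreshold_\varplayer$ by the same algebraic rearrangement of $\varcloudcost_{\varplayer,\varAP}\leq\varlocalcost_\varplayer$. Your explicit remark that $\varnumberofusers_\varAP(\varAP,\vardecision_{-\varplayer})$ already counts MU $\varplayer$ is the same bookkeeping the paper handles via $\varnumberofusers_\varoAP(\varoAP,\vardecision_{-\varplayer})=\varnumberofusers_\varoAP(\varAP,\vardecision_{-\varplayer})+1$.
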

\begin{proof}
Based on~(\ref{eq:local_cost_2}),~(\ref{eq:indicator_function}),~(\ref{eq:cost_function}) and~(\ref{eq:offloading_cost_2ee}), the cost of MU $\varplayer$ when choosing $\vardecision_\varplayer$ is
\begin{align}
 & \nonumber \varcostfunction_{\varplayer}(\vardecision_\varplayer, \vardecision_{-\varplayer}) =\varlocalcost_{\varplayer}\varindicatorfunction(\vardecision_{\varplayer},0) + \displaystyle{\sum_{i=1}^{\varAPsdim}\varcloudcost_{\varplayer,\varAP}(\vardecision_\varplayer, \vardecision_{-\varplayer})\varindicatorfunction(\vardecision_{\varplayer},\varAP)}\\ \nonumber
 & = \big((\frac{\vartimeweight_\varplayer}{\varusercapability_\varplayer}+\varenergyweight_\varplayer\varenergyconstant_\varplayer)\varCPUcyclesnumber_\varplayer\big)\varindicatorfunction(\vardecision_{\varplayer},0)\\ 
 & +\sum\limits_{i=1}^{\varAPsdim}\big((\vartimeweight_\varplayer+\varenergyweight_\varplayer\varpower_\varplayer)\vardatasize_\varplayer\frac{\varnumberofusers_\varAP(\varAP,\vardecision_{-\varplayer})}{\varchannelbandwidth_\varAP} +  \vartimeweight_\varplayer\frac{\varCPUcyclesnumber_\varplayer}{\varcloudcapability}\big)\varindicatorfunction(\vardecision_{\varplayer},\varAP). \nonumber
\end{align}
Let us first consider the case that the best reply of MU $\varplayer$ is $\vardecision_{\varplayer}^{*}=0$. We then have that
$\varcostfunction_{\varplayer}(0,\vardecision_{-\varplayer}) \leq \varcostfunction_{\varplayer}(\varAP,\vardecision_{-\varplayer})$ for every AP $\varAP \in \varAPs$, which implies that
\begin{align}
\nonumber (\frac{\vartimeweight_\varplayer}{\varusercapability_\varplayer}+\varenergyweight_\varplayer\varenergyconstant_\varplayer)\varCPUcyclesnumber_\varplayer \leq 
(\vartimeweight_\varplayer+\varenergyweight_\varplayer\varpower_\varplayer)\vardatasize_\varplayer\frac{\varnumberofusers_\varAP(\varAP,\vardecision_{-\varplayer})}{\varchannelbandwidth_\varAP} + 
\vartimeweight_\varplayer\frac{\varCPUcyclesnumber_\varplayer}{\varcloudcapability}.
\end{align}
After algebraic manipulations we obtain\\
\begin{align}
\nonumber \vartreshold_\varplayer \triangleq \frac{\varenergyweight_\varplayer\varenergyconstant_\varplayer+\vartimeweight_\varplayer(\frac{1}{\varusercapability_\varplayer}-\frac{1}{\varcloudcapability})}{\vartimeweight_\varplayer+
\varenergyweight_\varplayer\varpower_\varplayer}\cdot\frac{\varCPUcyclesnumber_\varplayer}{\vardatasize_\varplayer} \leq \frac{\varnumberofusers_\varAP(\varAP,\vardecision_{-\varplayer})}{\varchannelbandwidth_\varAP}.
\end{align}
Let us now consider the case when the best reply of MU $\varplayer$  is $\vardecision_{\varplayer}^{*}=\varAP$. We then have that $\varcostfunction_{\varplayer}(\varAP,\vardecision_{-\varplayer}) \leq \varcostfunction_{\varplayer}(0,\vardecision_{-\varplayer})$ and 
$\varcostfunction_{\varplayer}(\varAP,\vardecision_{-\varplayer}) \leq \varcostfunction_{\varplayer}(\varoAP,\vardecision_{-\varplayer})$ for every AP $\varoAP \in \varAPs\setminus\{\varAP\}$. Following the same reasoning as above, it is easy to see that 
$\varcostfunction_{\varplayer}(\varAP,\vardecision_{-\varplayer}) \leq \varcostfunction_{\varplayer}(0,\vardecision_{-\varplayer}^{*})$ implies that $\frac{\varnumberofusers_\varAP(\varAP,\vardecision_{-\varplayer})}{\varchannelbandwidth_\varAP} \leq \vartreshold_{\varplayer}$. 
It is easy to see that $\varnumberofusers_\varoAP(\varoAP,\vardecision_{-\varplayer})=\varnumberofusers_\varoAP(\varAP,\vardecision_{-\varplayer})+1$,
and thus $\varcostfunction_{\varplayer}(\varAP,\vardecision_{-\varplayer}^{*}) \leq \varcostfunction_{\varplayer}(\varoAP,\vardecision_{-\varplayer}^{*})$ implies that
\begin{align}
  \nonumber (\vartimeweight_\varplayer+\varenergyweight_\varplayer\varpower_\varplayer)\vardatasize_\varplayer\frac{\varnumberofusers_\varAP(\varAP,\vardecision_{-\varplayer})}{\varchannelbandwidth_\varAP}
  \leq 
  (\vartimeweight_\varplayer+\varenergyweight_\varplayer\varpower_\varplayer)\vardatasize_\varplayer\frac{\varnumberofusers_{\varoAP}(\varoAP,\vardecision_{-\varplayer})}{\varchannelbandwidth_\varoAP}
\end{align}
which is equivalent to
\begin{align}
\nonumber \frac{\varnumberofusers_\varAP(\varAP,\vardecision_{-\varplayer})}{\varchannelbandwidth_\varAP} \leq \frac{\varnumberofusers_\varoAP(\varoAP,\vardecision_{-\varplayer})}{\varchannelbandwidth_\varoAP}.
\end{align}
\end{proof}
The above threshold strategy allows players to compute their best and better replies efficiently. In what follows we show
that the computation offloading game with elastic cloud admits a NE, and a NE can be computed by iterative computation of the players' better or best replies, i.e., following an improvement path.
Before we formulate the theorem, let us recall the definition of a generalized ordinal potential from~\cite{Monderer1996124}.

\begin{definition}
  A function $\Phi:\times\vardecisionsset_\varplayer\rightarrow \mathbb{R}$ is a generalized ordinal potential function for the strategic game $<\varplayersset, (\vardecisionsset_\varplayer)_\varplayer, (\varcostfunction_{\varplayer})_\varplayer>$ if for an arbitrary strategy profile $(\vardecision_\varplayer,\vardecision_{-\varplayer})$ and for any corresponding improvement step $\vardecision^\prime_\varplayer$ it holds that
  \begin{align} 
  \varcostfunction_{\varplayer}(\vardecision^\prime_\varplayer,\vardecision_{-\varplayer}) -  &\varcostfunction_{\varplayer}(\vardecision_\varplayer,\vardecision_{-\varplayer}) < 0 \Rightarrow \nonumber\\
    & \Phi(\vardecision^\prime_\varplayer,\vardecision_{-\varplayer}) -  \Phi(\vardecision_\varplayer,\vardecision_{-\varplayer}) < 0.\nonumber\end{align}
  \end{definition}
 
\begin{theorem}
  The computation offloading game with elastic cloud admits the generalized ordinal potential function
\begin{equation}\label{eq:potential_function}
    \Phi(\vardecisionsvector)= \displaystyle{\sum_{m=1}^{\varAPsdim} \displaystyle{\sum_{n=1}^{\varnumberofusers_m(\vardecisionsvector)}}}
    \frac{n}{\varchannelbandwidth_m}
    + \displaystyle{\sum_{s=1}^{\varplayerssetdim}}\vartreshold_{s}\varindicatorfunction(\vardecision_s,0),
   \end{equation}
  and hence it possesses a pure strategy Nash equilibrium. 
\end{theorem}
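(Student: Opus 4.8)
The plan is to show directly that $\Phi$ in~(\ref{eq:potential_function}) strictly decreases along every improvement step. Since $\vardecisionsset$ is finite, this establishes the finite improvement property: $\Phi$ attains a minimum on $\vardecisionsset$, and at a minimizer no MU can perform an improvement step, so that profile is a NE. It is therefore enough to verify the generalized ordinal potential inequality for a single deviating MU $\varplayer$ with $\vardecision_{-\varplayer}$ held fixed. I would split $\Phi$ into the congestion part $\Phi_1(\vardecisionsvector)=\sum_{m=1}^{\varAPsdim}\sum_{n=1}^{\varnumberofusers_m(\vardecisionsvector)} n/\varchannelbandwidth_m$ and the local part $\Phi_2(\vardecisionsvector)=\sum_{s=1}^{\varplayerssetdim}\vartreshold_s\varindicatorfunction(\vardecision_s,0)$. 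The key observation is that $\Phi_1$ is a Rosenthal-type potential: adding one MU to AP $m$, so that its occupancy becomes $\varnumberofusers_m$, raises $\Phi_1$ by exactly $\varnumberofusers_m/\varchannelbandwidth_m$, while removing one MU from AP $m$ lowers $\Phi_1$ by its pre-removal occupancy divided by $\varchannelbandwidth_m$.

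Because only $\varplayer$ moves, there are exactly three deviation types to check, and for each I would line up the cost-based improvement inequality against the induced $\Delta\Phi$. When $\varplayer$ stops offloading ($\varAP\to 0$), $\Phi_1$ drops by $\varnumberofusers_\varAP/\varchannelbandwidth_\varAP$ (occupancy counted with $\varplayer$ present) and $\Phi_2$ rises by $\vartreshold_\varplayer$; the improvement condition $\varlocalcost_\varplayer<\varcloudcost_{\varplayer,\varAP}(\vardecisionsvector)$ is, after the same algebra used in the Lemma, equivalent to $\vartreshold_\varplayer<\varnumberofusers_\varAP/\varchannelbandwidth_\varAP$, which gives $\Delta\Phi<0$. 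The reverse move ($0\to\varAP$) is symmetric: $\Phi_1$ rises by the new occupancy $\varnumberofusers_\varAP/\varchannelbandwidth_\varAP$ and $\Phi_2$ drops by $\vartreshold_\varplayer$, while the improvement condition now reads $\varnumberofusers_\varAP/\varchannelbandwidth_\varAP<\vartreshold_\varplayer$.

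For a move between APs ($\varAP\to\varoAP$), $\Phi_2$ is unchanged and $\Delta\Phi=\Delta\Phi_1=\varnumberofusers_\varoAP/\varchannelbandwidth_\varoAP-\varnumberofusers_\varAP/\varchannelbandwidth_\varAP$, with the occupancy of $\varoAP$ counted after the move and that of $\varAP$ before it. Here the common cloud-execution term $\vartimeweight_\varplayer\varCPUcyclesnumber_\varplayer/\varcloudcapability$ in~(\ref{eq:offloading_cost_2ee}) cancels, and since $(\vartimeweight_\varplayer+\varenergyweight_\varplayer\varpower_\varplayer)\vardatasize_\varplayer>0$ the improvement condition $\varcloudcost_{\varplayer,\varoAP}<\varcloudcost_{\varplayer,\varAP}$ reduces to precisely $\varnumberofusers_\varoAP/\varchannelbandwidth_\varoAP<\varnumberofusers_\varAP/\varchannelbandwidth_\varAP$, so again $\Delta\Phi<0$.

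I expect the only delicate part to be the bookkeeping of the occupancy counts, i.e., being consistent about whether $\varnumberofusers_\varAP$ includes the deviating MU in the ``before'' versus ``after'' profile, together with using the strict positivity of the coefficient $(\vartimeweight_\varplayer+\varenergyweight_\varplayer\varpower_\varplayer)\vardatasize_\varplayer$ to divide through the cost inequalities. Once the three sign computations are assembled, the generalized ordinal potential property holds, and the existence of a pure NE follows from finiteness of $\vardecisionsset$ as above.
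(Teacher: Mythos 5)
Your proposal is correct and follows essentially the same route as the paper: verifying case by case (local$\to$AP, AP$\to$local, AP$\to$AP) that every improvement step strictly decreases $\Phi$, using the same threshold algebra relating $\vartreshold_\varplayer$ to $\varnumberofusers_\varAP/\varchannelbandwidth_\varAP$ and the occupancy-difference identities. Your explicit split into a Rosenthal-type congestion term and a local term, and your concluding via a minimizer of $\Phi$ rather than citing the finite improvement property, are only presentational differences.
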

\begin{proof}
  To prove that $\Phi(\vardecisionsvector)$ is a generalized ordinal potential function, we first show
  that $\varcostfunction_{\varplayer}(\varAP,\vardecision_{-\varplayer})<\varcostfunction_{\varplayer}(0,\vardecision_{-\varplayer})$
 implies $\Phi_{\varplayer}(\varAP,\vardecision_{-\varplayer})<\Phi_{\varplayer}(0,\vardecision_{-\varplayer})$ for a MU $\varplayer$.
 According to~(\ref{eq:local_cost_2}),~(\ref{eq:cost_function}) and~(\ref{eq:offloading_cost_2ee}), the condition 
 $\varcostfunction_{\varplayer}(\varAP,\vardecision_{-\varplayer})<\varcostfunction_{\varplayer}(0,\vardecision_{-\varplayer})$ implies that 
 \begin{equation}\label{eq:condition_1}
 \frac{\varnumberofusers_\varAP(\varAP,\vardecision_{-\varplayer})}{\varchannelbandwidth_\varAP} < \vartreshold_{\varplayer}
 \end{equation}
for the strategy profile $(\varAP,\vardecision_{-\varplayer})$  it holds that
 \begin{align}
\nonumber \Phi(\varAP,\vardecision_{-\varplayer}) = \!\!\sum\limits_{n=1}^{\varnumberofusers_\varAP(\varAP,\vardecision_{-\varplayer})}\!\!\frac{n}{\varchannelbandwidth_\varAP}
+\!\!\sum\limits_{m\neq\varAP} \!\!\!\!\sum\limits_{n=1}^{\varnumberofusers_m(\varAP,\vardecision_{-\varplayer})}\!\!\frac{n}{\varchannelbandwidth_m}
+ \!\!\sum\limits_{s\neq\varplayer}\vartreshold_{s}\varindicatorfunction(\vardecision_s,0),
 \end{align}
and for the strategy profile $(0,\vardecision_{-\varplayer})$
 \begin{align}
\nonumber \Phi(0,\vardecision_{-\varplayer}) = \!\!\!\!\!\!\sum\limits_{n=1}^{\varnumberofusers_\varAP(0,\vardecision_{-\varplayer})}\!\!\!\!\!\!\!\frac{n}{\varchannelbandwidth_\varAP}
+\!\!\sum\limits_{m\neq\varAP}\!\!\!\!\! \sum\limits_{n=1}^{\varnumberofusers_m(0,\vardecision_{-\varplayer})}\!\!\!\!\!\!\!\!\frac{n}{\varchannelbandwidth_m}
 + \vartreshold_\varplayer + \!\!\sum\limits_{s\neq\varplayer}\!\vartreshold_{s}\varindicatorfunction(\vardecision_s,0).
 \end{align}
Since $\varnumberofusers_\varAP(\varAP,\vardecision_{-\varplayer})=\varnumberofusers_\varAP(0,\vardecision_{-\varplayer})+1$,  we obtain
\begin{align}
 \nonumber \Phi(\varAP,\vardecision_{-\varplayer}) - \Phi(0,\vardecision_{-\varplayer}) = \frac{\varnumberofusers_\varAP(\varAP,\vardecision_{-\varplayer})}{\varchannelbandwidth_\varAP}-\vartreshold_\varplayer.
\end{align}
It follows from~(\ref{eq:condition_1}) that $\Phi(\varAP,\vardecision_{-\varplayer}) - \Phi(0,\vardecision_{-\varplayer})<0$. 
Similarly, we can show that $\varcostfunction_{\varplayer}(0,\vardecision_{-\varplayer})<\varcostfunction_{\varplayer}(\varAP,\vardecision_{-\varplayer})$
implies $\Phi_{\varplayer}(0,\vardecision_{-\varplayer})<\Phi_{\varplayer}(\varAP,\vardecision_{-\varplayer})$.

Second, we show that $\varcostfunction_{\varplayer}(\varAP,\vardecision_{\varplayer})<\varcostfunction_{\varplayer}(\varoAP,\vardecision_{\varplayer})$
implies $\Phi_{\varplayer}(\varAP,\vardecision_{\varplayer})<\Phi_{\varplayer}(\varoAP,\vardecision_{\varplayer})$ for a MU $\varplayer$.
According to~(\ref{eq:cost_function}) and~(\ref{eq:offloading_cost_2ee}), the condition 
$\varcostfunction_{\varplayer}(\varAP,\vardecision_{\varplayer})<\varcostfunction_{\varplayer}(\varoAP,\vardecision_{\varplayer})$ implies that 
\begin{equation}\label{eq:condition_2}
 \frac{\varnumberofusers_\varAP(\varAP,\vardecision_{\varplayer})}{\varchannelbandwidth_\varAP} < \frac{\varnumberofusers_\varoAP(\varoAP,\vardecision_{\varplayer})}{\varchannelbandwidth_\varoAP}
\end{equation}
Let us rewrite $\Phi$ by separating the terms for APs $\varAP$ and $\varoAP$,
\begin{align}
\nonumber \Phi(\varAP,\vardecision_{-\varplayer}) & = \displaystyle{\sum_{n=1}^{\varnumberofusers_\varAP(\varAP,\vardecision_{-\varplayer})}}\frac{n}{\varchannelbandwidth_\varAP} +
\displaystyle{\sum_{n=1}^{\varnumberofusers_\varoAP(\varAP,\vardecision_{-\varplayer})}}\frac{n}{\varchannelbandwidth_\varoAP}\\
& +\displaystyle{\sum_{m\neq\varAP,\varoAP} \displaystyle{\sum_{n=1}^{\varnumberofusers_m}}}\frac{n}{\varchannelbandwidth_m}
+ \displaystyle{\sum_{s\neq\varplayer}}\vartreshold_{s}\varindicatorfunction(\vardecision_s,0).
 \end{align}
Since $\varnumberofusers_\varAP(\varAP,\vardecision_{-\varplayer})=\varnumberofusers_\varAP(\varoAP,\vardecision_{-\varplayer})+1$ and $\varnumberofusers_\varoAP(\varoAP,\vardecision_{-\varplayer})=\varnumberofusers_\varoAP(\varAP,\vardecision_{-\varplayer})+1$, we have that 
\begin{align}
 \nonumber \Phi(\varAP,\vardecision_{-\varplayer}) - \Phi(\varoAP,\vardecision_{-\varplayer})=\frac{\varnumberofusers_\varAP(\varAP,\vardecision_{-\varplayer})}{\varchannelbandwidth_\varAP}-
 \frac{\varnumberofusers_\varoAP(\varoAP,\vardecision_{-\varplayer})}{\varchannelbandwidth_\varoAP}.
\end{align}
It follows from~(\ref{eq:condition_2}) that $\Phi(\varAP,\vardecision_{-\varplayer}) - \Phi(\varoAP,\vardecision_{-\varplayer})<0$, which proves the theorem.
\end{proof}
The existence of a generalized ordinal potential allows us to formulate a simple algorithm for computing a Nash equilibrium by leveraging the fact that
in a game that all improvement paths are finite, i.e., lead to a Nash equilibrium, in a finite strategic game that admits a generalized ordinal potential function~\cite{Monderer1996124}. 
\begin{corollary}
  \label{th::elastic-fip}
 Starting from an arbitrary initial strategy profile, let one MU at a time perform an improvement step iteratively. The algorithm terminates in a NE after a finite number of steps for the computation offloading game with elastic cloud. 
\end{corollary}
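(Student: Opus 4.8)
The plan is to derive the statement directly from the Theorem together with the finiteness of the game, invoking the characterization of the finite improvement property for generalized ordinal potential games from~\cite{Monderer1996124}. First I would observe that $\varthegame$ is a finite strategic game: the player set $\varplayersset$ has cardinality $\varplayerssetdim$, and each strategy set $\vardecisionsset_\varplayer=\{0,1,\dots,\varAPsdim\}$ is finite, so the set $\vardecisionsset=\times_{\varplayer\in\varplayersset}\vardecisionsset_\varplayer$ of strategy profiles is finite. Consequently the potential function $\Phi$ constructed in the Theorem attains only finitely many distinct values over $\vardecisionsset$.

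Next I would exploit the defining property of a generalized ordinal potential. Along any improvement path every single-player improvement step $\vardecision_\varplayer\!\to\!\vardecision^\prime_\varplayer$ satisfies $\varcostfunction_{\varplayer}(\vardecision^\prime_\varplayer,\vardecision_{-\varplayer})<\varcostfunction_{\varplayer}(\vardecision_\varplayer,\vardecision_{-\varplayer})$, and by the Theorem this implies $\Phi(\vardecision^\prime_\varplayer,\vardecision_{-\varplayer})<\Phi(\vardecision_\varplayer,\vardecision_{-\varplayer})$. Hence $\Phi$ is strictly decreasing along every improvement path. Since a strictly decreasing real sequence never repeats a value, and $\Phi$ takes only finitely many values, no improvement path can visit infinitely many profiles; therefore every improvement path is finite. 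This is precisely the finite improvement property enjoyed by finite games that admit a generalized ordinal potential~\cite{Monderer1996124}, which covers general improvement (better-reply) steps and not only best replies.

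Finally I would identify the terminal profile as a Nash equilibrium. The iterative procedure keeps taking an improvement step as long as some MU has one available; because the path is guaranteed to terminate after finitely many steps, it halts at a profile $\vardecision^*$ from which no MU possesses an improvement step, i.e., $\varcostfunction_{\varplayer}(\vardecision^*_\varplayer,\vardecision^*_{-\varplayer})\leq\varcostfunction_{\varplayer}(\vardecision_\varplayer,\vardecision^*_{-\varplayer})$ for every $\varplayer\in\varplayersset$ and every $\vardecision_\varplayer\in\vardecisionsset_\varplayer$, which is exactly the definition of a NE. There is no genuine obstacle once the Theorem is in hand: the only point requiring care is the passage from \emph{strict decrease} of $\Phi$ to \emph{termination}, which relies essentially on the finiteness of $\vardecisionsset$ rather than on any metric argument. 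If an explicit bound on the number of steps were desired, the loose part would be bounding the path length; a trivial bound is $|\vardecisionsset|$, but the threshold structure of the best replies in the Lemma could be leveraged to argue faster convergence.
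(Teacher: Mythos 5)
Your proof is correct and takes essentially the same approach as the paper: the paper simply invokes the finite improvement property of finite games admitting a generalized ordinal potential (citing Monderer and Shapley), while you unpack that citation by noting that $\Phi$ strictly decreases along any improvement path, takes only finitely many values on the finite profile set, and hence every path terminates at a profile where no MU has an improvement step, which is a NE. There is no gap.
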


\section{Equilibria in case of a Non-Elastic Cloud}
 \label{sec::non-elastic}
In the case of a non-elastic cloud the  computation 
capability $\varcloudcapability_\varplayer$ that is assigned to MU $\varplayer$ in the cloud server depends on the other MUs' strategies, and thus the
 cost function in case of offloading can be expressed as
\begin{equation}\label{eq:offloading_cost_2ene}
  \varcloudcost_{\varplayer,\varAP}(\vardecisionsvector)=(\vartimeweight_\varplayer+\varenergyweight_\varplayer\varpower_\varplayer)\vardatasize_\varplayer\frac{\varnumberofusers_\varAP(\vardecisionsvector)}{\varchannelbandwidth_\varAP} + 
  \vartimeweight_\varplayer\frac{\varCPUcyclesnumber_\varplayer}{\varcloudcapability}\varnumberofusers(\vardecisionsvector).
\end{equation}
A natural question is whether a generalized ordinal potential similar to (\ref{eq:potential_function}) exists in the case of non-elastic cloud,
in which case all improvement paths would be finite. We first show that if we only allow MUs to change between APs, but we do not allow them to start or to stop offloading, then this holds.
\begin{lemma}
  \label{th:swapping-finite}
  Consider an arbitrary strategy profile $\vardecisionsvector$, and consider that (i) the improvement step $\vardecision^\prime_\varplayer$ of MU $\varplayer\in \varoffloaders(\vardecisionsvector)$ is constrained to $\vardecision^\prime_\varplayer\in \varAPs$,
  and (ii) MUs $\varplayer\not\in \varoffloaders(\vardecisionsvector)$ are not allowed to perform improvement steps.
  Then all improvement paths that satisfy constraints (i) and (ii) are finite.
\end{lemma}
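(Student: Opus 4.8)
The plan is to reduce the constrained dynamics to the congestion structure already analyzed for the elastic cloud, by observing that the non-elastic complication disappears once the set of offloaders is frozen. First I would note that under constraints (i) and (ii) no MU ever enters or leaves the set of offloaders: an offloader $\varplayer\in\varoffloaders(\vardecisionsvector)$ may only move to another AP, and a non-offloader takes no step at all. Consequently the set $\varoffloaders(\vardecisionsvector)$, and in particular the total number of offloaders $\varnumberofusers(\vardecisionsvector)$, is invariant along every admissible improvement path.

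The key consequence is that the cloud-computing term $\vartimeweight_\varplayer \frac{\varCPUcyclesnumber_\varplayer}{\varcloudcapability}\varnumberofusers(\vardecisionsvector)$ in the offloading cost~(\ref{eq:offloading_cost_2ene}) is constant for every MU throughout the path, since it depends on the strategy profile only through $\varnumberofusers(\vardecisionsvector)$. Hence when an offloader $\varplayer$ switches from AP $\varAP$ to AP $\varoAP$, the change in its cost is governed entirely by the contention term $(\vartimeweight_\varplayer+\varenergyweight_\varplayer\varpower_\varplayer)\vardatasize_\varplayer \frac{\varnumberofusers_\varAP(\vardecisionsvector)}{\varchannelbandwidth_\varAP}$, exactly as in the elastic case. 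Since $(\vartimeweight_\varplayer+\varenergyweight_\varplayer\varpower_\varplayer)\vardatasize_\varplayer>0$, such a switch is an improvement step if and only if the target AP has strictly smaller normalized load, i.e. $\frac{\varnumberofusers_\varoAP(\varoAP,\vardecision_{-\varplayer})}{\varchannelbandwidth_\varoAP} < \frac{\varnumberofusers_\varAP(\varAP,\vardecision_{-\varplayer})}{\varchannelbandwidth_\varAP}$, which is precisely the AP-to-AP condition~(\ref{eq:condition_2}) analyzed in the preceding theorem.

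I would then use the AP-congestion part of the elastic potential,
\begin{equation}\nonumber
  \Phi(\vardecisionsvector) = \sum_{m=1}^{\varAPsdim} \sum_{n=1}^{\varnumberofusers_m(\vardecisionsvector)} \frac{n}{\varchannelbandwidth_m},
\end{equation}
and show it is a generalized ordinal potential for the restricted dynamics. The local-computing term of~(\ref{eq:potential_function}) may be dropped because no MU changes its offloading status, so it never varies. The computation then mirrors the second part of the proof of the preceding theorem: using $\varnumberofusers_\varAP(\varAP,\vardecision_{-\varplayer})=\varnumberofusers_\varAP(\varoAP,\vardecision_{-\varplayer})+1$ and $\varnumberofusers_\varoAP(\varoAP,\vardecision_{-\varplayer})=\varnumberofusers_\varoAP(\varAP,\vardecision_{-\varplayer})+1$, one obtains $\Phi(\varoAP,\vardecision_{-\varplayer})-\Phi(\varAP,\vardecision_{-\varplayer}) = \frac{\varnumberofusers_\varoAP(\varoAP,\vardecision_{-\varplayer})}{\varchannelbandwidth_\varoAP} - \frac{\varnumberofusers_\varAP(\varAP,\vardecision_{-\varplayer})}{\varchannelbandwidth_\varAP}$, so every admissible improvement step strictly decreases $\Phi$.

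Finally, I would conclude finiteness: with $\varoffloaders(\vardecisionsvector)$ fixed and both the number of offloaders and the number of APs $\varAPsdim$ finite, the set of strategy profiles reachable under constraints (i) and (ii) is finite, and $\Phi$ takes finitely many values on it. Since $\Phi$ strictly decreases at every admissible step, no profile can recur, and therefore every such improvement path is finite. The one step I would be most careful with — and the crux of the argument — is the reduction itself: verifying that because $\varnumberofusers(\vardecisionsvector)$ is invariant the cloud term cancels \emph{exactly} in every cost difference along the path, so that the problem genuinely collapses onto the already-proven elastic congestion argument rather than merely resembling it.
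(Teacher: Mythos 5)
Your proof is correct, but it takes a genuinely different route from the paper's. The paper also begins by freezing the offloader set, but then argues termination via a lexicographic ordering: it collects the costs $\varcostfunction_\varplayer(\vardecisionsvector)$ of the offloaders into a vector sorted in decreasing order and claims that each admissible AP-to-AP improvement step makes this vector lexicographically smaller, whence finiteness follows from the finiteness of the strategy space. You instead observe that once $\varnumberofusers(\vardecisionsvector)$ is frozen the cloud term cancels exactly from every cost difference, so the restricted dynamics coincide with those of the elastic game, and you reuse the congestion part of the elastic potential $\sum_{m}\sum_{n=1}^{\varnumberofusers_m(\vardecisionsvector)} n/\varchannelbandwidth_m$ as a generalized ordinal potential for the restricted dynamics. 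Your route is arguably the more robust of the two: the improvement condition (target load strictly smaller than source load) is independent of the player-specific weight $(\vartimeweight_\varplayer+\varenergyweight_\varplayer\varpower_\varplayer)\vardatasize_\varplayer$, and your potential only ever compares these weight-free loads. The paper's cost-sorted vector, by contrast, is weight-dependent: an MU with a very large weight already sitting on the destination AP has its cost \emph{increased} by the move, and that cost can be the largest entry of the sorted vector, so the claimed lexicographic decrease does not hold for costs as stated and really requires sorting the experienced loads $\varnumberofusers_{\vardecision_\varplayer}(\vardecisionsvector)/\varchannelbandwidth_{\vardecision_\varplayer}$ instead --- a repair that your potential-function argument renders unnecessary.
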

\begin{proof}
  First, observe that the set $\varoffloaders(\vardecisionsvector)$  of offloaders is unchanged during an improvement path with constraints (i) and (ii). For a strategy profile $\vardecisionsvector$ let
  vector $\varcostvector(\vardecisionsvector)\in\mathbb{R}_{\geq 0}^{\vert \varoffloaders(\vardecisionsvector) \vert}$ contain the cost $\varcostfunction_{\varplayer}(\vardecisionsvector)$ for MUs $\varplayer \in \varoffloaders(\vardecisionsvector)$ in decreasing order. Let $\vardecisionsvector^\prime=(\vardecision_\varplayer^\prime,\vardecision_{-\varplayer})$ be the strategy profile after an improvement step made by MU $\varplayer$ that satisfies constraint (i), and let $\varAP=\vardecision_\varplayer$ and $\varoAP=\vardecision_\varplayer^\prime$.
  Since  $\frac{\varnumberofusers_\varoAP(\vardecisionsvector)+1}{\varchannelbandwidth_\varoAP}<\frac{\varnumberofusers_\varAP(\vardecisionsvector)}{\varchannelbandwidth_\varAP}$ must hold for the change of APs to be an improvement step, we have $\varcostvector(\vardecisionsvector^\prime)\prec_{L}\varcostvector(\vardecisionsvector)$, where $\prec_{L}$ stands for lexicographically smaller. Since $\varcostvector(\vardecisionsvector)$ decreases in the lexiographical sense upon every improvement step, and the number of strategy profiles is finite, the improvement paths must be finite.
\end{proof}
Thus, if MUs can only change between APs, they terminate after a finite number of improvement steps.
Unfortunately, as the following example shows, this is not the case if MUs can decide not to offload, and thus the
algorithm in Corrollary~\ref{th::elastic-fip} cannot be used to compute a NE, even if a NE exists.
\begin{example}
  Consider a computation offloading game with non-elastic cloud where $\varplayersset=\{a,b,c,d,e\}$ and $\varAPs=\{1,2,3\}$ as illustrated in Figure~\ref{fig:model}.
  Figure~\ref{fig::example_cycle} shows a cyclic improvement path starting from the strategy profile  $(1,2,1,0,0)$,  in which MUs $a$ and $c$ are connected
to AP 1, MU $b$ is connected to AP 2 and MUs $d$ and $e$ perform local computation.
\end{example}
\begin{figure}[h!]
\pgfdeclarelayer{background}
\pgfdeclarelayer{foreground}
\pgfsetlayers{background,main,foreground}
\centering
\vspace{-0.5cm}
\begin{tikzpicture}
\tikzset{ 
table/.style={
  matrix of nodes,
  row sep=-\pgflinewidth,
  column sep=-\pgflinewidth,
  nodes={rectangle,text width=0.7em,align=center},
  text depth=1.25ex,
  text height=2ex,
  nodes in empty cells
},
row 1/.style={nodes={fill=green!10}},
column 1/.style={nodes={fill=green!10,text width=1.6em}},
}
\matrix (mat) [table]
{
    $\vardecision_\varplayer$ & $\vardecision_a$ & $\vardecision_b$ & $\vardecision_c$ & $\vardecision_d$ & $\vardecision_e$\\ 
    $\vardecisionsvector(0)$ & 1 & 2 & 1 & 0 & 0 \\
    $\vardecisionsvector(1)$ & 1 & 2 & 2 & 0 & 0 \\
    $\vardecisionsvector(2)$ & 1 & 0 & 2 & 0 & 0 \\
    $\vardecisionsvector(3)$ & 1 & 0 & 2 & 2 & 0 \\
    $\vardecisionsvector(4)$ & 1 & 0 & 2 & 2 & 2 \\
    $\vardecisionsvector(5)$ & 1 & 0 & 1 & 2 & 2 \\
    $\vardecisionsvector(6)$ & 1 & 3 & 1 & 2 & 2 \\
    $\vardecisionsvector(7)$ & 1 & 3 & 1 & 2 & 0 \\
    $\vardecisionsvector(8)$ & 1 & 3 & 1 & 0 & 0 \\
    $\vardecisionsvector(9)$ & 1 & 2 & 1 & 0 & 0 \\
};
\draw 
    ([xshift=-.5\pgflinewidth]mat-1-1.north west) --   
    ([xshift=-.5\pgflinewidth]mat-1-6.north east);
\draw [thick,double]
    ([xshift=-.5\pgflinewidth]mat-1-1.south west) --   
    ([xshift=-.5\pgflinewidth]mat-1-6.south east);
\draw 
    ([xshift=-.5\pgflinewidth]mat-11-1.south west) --   
    ([xshift=-.5\pgflinewidth]mat-11-6.south east);
\draw 
    ([yshift=.10\pgflinewidth]mat-1-1.north west) -- 
    ([yshift=.10\pgflinewidth]mat-11-1.south west);
\draw 
    ([yshift=.10\pgflinewidth]mat-1-6.north east) -- 
    ([yshift=.10\pgflinewidth]mat-11-6.south east);
    
\foreach \x in {2,...,10}
{
  \draw 
    ([xshift=-.5\pgflinewidth]mat-\x-1.south west) --   
    ([xshift=-.5\pgflinewidth]mat-\x-6.south east);
  }
  
\foreach \x in {1,...,5}
{
  \draw 
    ([yshift=.10\pgflinewidth]mat-1-\x.north east) -- 
    ([yshift=.10\pgflinewidth]mat-11-\x.south east);
}


\begin{scope}[shorten >=7pt,shorten <= 7pt]
\draw[->]  (mat-2-4.center) -- (mat-3-4.center);
\draw[->]  (mat-3-3.center) -- (mat-4-3.center);
\draw[->]  (mat-4-5.center) -- (mat-5-5.center);
\draw[->]  (mat-5-6.center) -- (mat-6-6.center);
\draw[->]  (mat-6-4.center) -- (mat-7-4.center);
\draw[->]  (mat-7-3.center) -- (mat-8-3.center);
\draw[->]  (mat-8-6.center) -- (mat-9-6.center);
\draw[->]  (mat-9-5.center) -- (mat-10-5.center);
\draw[->]  (mat-10-3.center) -- (mat-11-3.center);
\end{scope}

\node at ([xshift=3.1cm, yshift=-0.48cm]mat-2-6.east) 
{$\scriptstyle {\varchannelbandwidth_2>\varchannelbandwidth_1 \hspace{2.95cm}\footnotesize  (1)}$};
\node at ([xshift=3cm, yshift=-0.48cm]mat-3-6.east) 
{$\scriptstyle \frac{2}{\varchannelbandwidth_2}(\vartimeweight_b+\varenergyweight_b\varpower_b)\vardatasize_b+
3\vartimeweight_b\frac{\varCPUcyclesnumber_b}{\varcloudcapability}>\varlocalcost_b\hspace{0.09cm}\footnotesize  (2)$};
\node at ([xshift=3cm, yshift=-0.48cm]mat-4-6.east) 
{$\scriptstyle \varlocalcost_d>\frac{2}{\varchannelbandwidth_2}(\vartimeweight_d+\varenergyweight_d\varpower_d)\vardatasize_d+
3\vartimeweight_d\frac{\varCPUcyclesnumber_d}{\varcloudcapability}\hspace{0.08cm}\footnotesize  (3)$};
\node at ([xshift=3cm, yshift=-0.48cm]mat-5-6.east) 
{$\scriptstyle \varlocalcost_e>\frac{3}{\varchannelbandwidth_2}(\vartimeweight_e+\varenergyweight_e\varpower_e)\vardatasize_e+
4\vartimeweight_e\frac{\varCPUcyclesnumber_e}{\varcloudcapability}\hspace{0.08cm}\footnotesize  (4)$};
\node at ([xshift=3.1cm, yshift=-0.48cm]mat-6-6.east) 
{$\scriptstyle \varchannelbandwidth_1>\frac{2}{3}\varchannelbandwidth_2\hspace{2.7cm}\footnotesize  (5)$};
\node at ([xshift=3cm, yshift=-0.48cm]mat-7-6.east) 
{$\scriptstyle \varlocalcost_b>\frac{1}{\varchannelbandwidth_3}(\vartimeweight_b+\varenergyweight_b\varpower_b)\vardatasize_b+
5\vartimeweight_b\frac{\varCPUcyclesnumber_b}{\varcloudcapability}\hspace{0.08cm}\footnotesize  (6)$};
\node at ([xshift=3cm, yshift=-0.48cm]mat-8-6.east) 
{$\scriptstyle \frac{2}{\varchannelbandwidth_2}(\vartimeweight_e+\varenergyweight_e\varpower_e)\vardatasize_e+
5\vartimeweight_e\frac{\varCPUcyclesnumber_e}{\varcloudcapability}>\varlocalcost_e\hspace{0.08cm}\footnotesize  (7)$};
\node at ([xshift=3cm, yshift=-0.48cm]mat-9-6.east) 
{$\scriptstyle \frac{1}{\varchannelbandwidth_2}(\vartimeweight_d+\varenergyweight_d\varpower_d)\vardatasize_d+
4\vartimeweight_d\frac{\varCPUcyclesnumber_d}{\varcloudcapability}>\varlocalcost_d\hspace{0.08cm}\footnotesize  (8)$};
\node at ([xshift=3.1cm, yshift=-0.48cm]mat-10-6.east) 
{$\scriptstyle \varchannelbandwidth_2>\varchannelbandwidth_3\hspace{2.9cm}\footnotesize  (9)$};
\end{tikzpicture}
\vspace{-0.5cm}
\caption{A cyclic improvement path in a computation offloading game with non-elastic cloud, 3 APs and 5 MUs. Rows correspond to strategy profiles, columns to MUs. An arrow between adjacent rows indicates the MU that performs the improvement step. The cycle consists of 9 improvement steps, and involves some MUs to start and to stop offloading. The inequalities on the right show the condition under which the change of strategy is an improvement step.}
\label{fig::example_cycle}
\end{figure}
Starting from the initial strategy profile $(1,2,1,0,0)$, Player $c$ revises its strategy to AP $2$, which is an improvement step
if $\varchannelbandwidth_2>\varchannelbandwidth_1$, as shown in inequality (1) in the figure.
Observe that after $9$ improvement steps the players reach the initial strategy profile.
For each step the inequality on the right provides the condition for being an improvement. It follows from inequalities (1), (5) and (9) that 
$\varchannelbandwidth_2>\varchannelbandwidth_1$, $\varchannelbandwidth_1>\frac{2}{3}\varchannelbandwidth_2$ and
$\varchannelbandwidth_2>\varchannelbandwidth_3$, respectively. 
Since,$\frac{1}{\varchannelbandwidth_3}(\vartimeweight_b+\varenergyweight_b\varpower_b)\vardatasize_b+
5\vartimeweight_b\frac{\varCPUcyclesnumber_b}{\varcloudcapability}>\frac{1}{\varchannelbandwidth_3}(\vartimeweight_b+\varenergyweight_b\varpower_b)\vardatasize_b+
3\vartimeweight_b\frac{\varCPUcyclesnumber_b}{\varcloudcapability}$ holds, from inequalities (2) and (6) follows that $\varchannelbandwidth_3>\frac{1}{2}\varchannelbandwidth_2$. 
Combining inequalities (3) and (8) we have that $\vartimeweight_d\frac{\varCPUcyclesnumber_d}{\varcloudcapability}>\frac{1}{\varchannelbandwidth_2}(\vartimeweight_d+\varenergyweight_d\varpower_d)\vardatasize_d$. 
Similarly, it follows from inequalities (4) and (7) that $\vartimeweight_e\frac{\varCPUcyclesnumber_e}{\varcloudcapability}>\frac{1}{\varchannelbandwidth_2}(\vartimeweight_e+\varenergyweight_e\varpower_e)\vardatasize_e$.
Given these constraints, an instance of the example can be formulated easily.

An important consequence of the cycle in the improvement path is that the computation offloading game with non-elastic cloud does not allow a potential function, and thus Corollary~\ref{th::elastic-fip} cannot be applied. Yet, as we now show, NE always exist.
\begin{theorem}\label{theo:NE_ene}
The computation offloading game with non-elastic cloud possesses a pure strategy Nash equilibrium. 
\end{theorem}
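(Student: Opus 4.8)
The plan is to prove existence constructively, since the cyclic improvement path above rules out any generalized ordinal potential and hence any ``all improvement paths are finite'' argument. The construction separates the two ingredients of the offloading cost in~(\ref{eq:offloading_cost_2ene}): the access part $(\vartimeweight_\varplayer+\varenergyweight_\varplayer\varpower_\varplayer)\vardatasize_\varplayer\,\varnumberofusers_\varAP(\vardecisionsvector)/\varchannelbandwidth_\varAP$, which depends on the chosen AP, and the cloud part $\vartimeweight_\varplayer(\varCPUcyclesnumber_\varplayer/\varcloudcapability)\,\varnumberofusers(\vardecisionsvector)$, which depends only on the \emph{total} number of offloaders. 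The first key observation is that the improvement condition for an AP swap, $\frac{\varnumberofusers_\varoAP(\vardecisionsvector)+1}{\varchannelbandwidth_\varoAP}<\frac{\varnumberofusers_\varAP(\vardecisionsvector)}{\varchannelbandwidth_\varAP}$, is independent of the deviating MU (the positive factor $(\vartimeweight_\varplayer+\varenergyweight_\varplayer\varpower_\varplayer)\vardatasize_\varplayer$ cancels, and the cloud part is unchanged by a swap). Consequently the AP-stable load vectors of a given number $\varnumberofusers$ of offloaders depend only on $\varnumberofusers$ and the bandwidths $(\varchannelbandwidth_\varAP)$, and by Lemma~\ref{th:swapping-finite} such a balanced assignment is always reachable. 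I would summarize a balanced assignment of $\varnumberofusers$ offloaders by its worst congestion $c(\varnumberofusers)=\max_{\varAP:\,\varnumberofusers_\varAP>0}\varnumberofusers_\varAP/\varchannelbandwidth_\varAP$, and establish two monotonicity facts by a water-filling argument: $c(\varnumberofusers)$ is non-decreasing in $\varnumberofusers$, and an MU that joins a balanced assignment of $\varnumberofusers$ offloaders (making the total $\varnumberofusers+1$) is placed at congestion exactly $c(\varnumberofusers+1)$.

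Second, I would reduce every MU's participation decision to a threshold on the number of offloaders. Since the access congestion an offloader experiences never exceeds $c(\varnumberofusers)$, and since the worst-case offloading cost $(\vartimeweight_\varplayer+\varenergyweight_\varplayer\varpower_\varplayer)\vardatasize_\varplayer\,c(\varnumberofusers)+\vartimeweight_\varplayer(\varCPUcyclesnumber_\varplayer/\varcloudcapability)\,\varnumberofusers$ is non-decreasing in $\varnumberofusers$, the set of offloader counts at which MU $\varplayer$ is willing to offload even under the worst congestion is downward closed. I would therefore define $\Theta_\varplayer=\max\{\varnumberofusers:\ (\vartimeweight_\varplayer+\varenergyweight_\varplayer\varpower_\varplayer)\vardatasize_\varplayer\,c(\varnumberofusers)+\vartimeweight_\varplayer(\varCPUcyclesnumber_\varplayer/\varcloudcapability)\,\varnumberofusers\le\varlocalcost_\varplayer\}$, with $\Theta_\varplayer=0$ if no such $\varnumberofusers$ exists.

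Third, I would pick the equilibrium population by an $h$-index-type rule: let $\varnumberofusers^{*}=\max\{\varnumberofusers:\ \lvert\{\varplayer:\Theta_\varplayer\ge \varnumberofusers\}\rvert\ge \varnumberofusers\}$, let $\varoffloaders^{*}$ consist of the $\varnumberofusers^{*}$ MUs with the largest thresholds, place them on APs via a balanced assignment (Lemma~\ref{th:swapping-finite}), and let the remaining MUs compute locally. I would then verify the three deviation types. No offloader gains by switching AP because the assignment is balanced. No offloader $\varplayer\in\varoffloaders^{*}$ gains by going local: by the $h$-index, $\Theta_\varplayer\ge \varnumberofusers^{*}$, so her cost under the worst congestion $c(\varnumberofusers^{*})$ --- and a fortiori under her actual congestion --- is at most $\varlocalcost_\varplayer$. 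No local MU $\varplayer\notin\varoffloaders^{*}$ gains by offloading: by maximality of $\varnumberofusers^{*}$ the $(\varnumberofusers^{*}+1)$-th largest threshold is at most $\varnumberofusers^{*}$, so $\Theta_\varplayer<\varnumberofusers^{*}+1$, and since joining places her at congestion exactly $c(\varnumberofusers^{*}+1)$ with total $\varnumberofusers^{*}+1$, her offloading cost strictly exceeds $\varlocalcost_\varplayer$.

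The crux is the externality created by the cloud term, which couples every MU's cost to the total number of offloaders and is precisely what destroys the potential. The delicate point is that the ``offloaders stay'' condition wants the \emph{worst} congestion (to certify satisfaction regardless of placement), whereas the ``locals stay out'' condition concerns the congestion a \emph{new} joiner would see. I expect the main work to be the water-filling lemma showing these two coincide, namely that the marginal joiner at total $\varnumberofusers^{*}+1$ faces exactly $c(\varnumberofusers^{*}+1)$; this identity lets a single threshold $\Theta_\varplayer$ govern both directions and makes the $h$-index selection consistent. A secondary point to check is that $c(\cdot)$ and the balanced load vectors are well defined independently of MU identities, which follows from the identity-independence of the swap condition noted above.
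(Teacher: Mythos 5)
Your proposal is correct, but it takes a genuinely different route from the paper's. The paper proves Theorem~\ref{theo:NE_ene} by induction on the number of players: starting from a NE of $\vargamestage-1$ players, the newly added player best-responds, and a carefully ordered update phase (Figure~\ref{fig:uphase}) --- the highest-reluctance offloaders stop offloading, locals with the highest local cost take the vacated places, and finally MUs swap between APs, which terminates by Lemma~\ref{th:swapping-finite} --- restores an equilibrium after finitely many best improvement steps. Your argument is instead a direct, non-inductive construction: you exploit the fact that the swap condition $\frac{\varnumberofusers_\varoAP+1}{\varchannelbandwidth_\varoAP}<\frac{\varnumberofusers_\varAP}{\varchannelbandwidth_\varAP}$ is player-independent, so swap-stable (``balanced'') load vectors are determined by the number of offloaders alone; you then collapse each MU's participation decision into a scalar threshold $\Theta_\varplayer$ and select the offloader set by an $h$-index rule. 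The key identity you flag --- that the marginal joiner at total $\varnumberofusers+1$ lands at congestion exactly $c(\varnumberofusers+1)$ --- is true, and is cleanest to prove via the ``slot'' view: a load vector is swap-stable iff the occupied slots $\{m/\varchannelbandwidth_\varAP : 1\le m\le \varnumberofusers_\varAP\}$ are precisely the $\varnumberofusers$ cheapest elements of the multiset $\{m/\varchannelbandwidth_\varAP : m\ge 1,\ \varAP\in\varAPs\}$, so $c(\varnumberofusers)$ is the $\varnumberofusers$-th order statistic of that multiset. This gives monotonicity of $c$, the joiner identity, and --- a point you attribute a bit too quickly to player-independence alone --- the fact that \emph{all} balanced assignments of $\varnumberofusers$ offloaders share the same worst congestion, which you genuinely need because Lemma~\ref{th:swapping-finite} hands you \emph{some} balanced assignment, not a canonical one. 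As for what each approach buys: yours yields an explicit structural characterization of an equilibrium (the offloaders are the MUs with the largest thresholds, placed by water-filling) and a one-shot computation of it, whereas the paper's induction is what powers the rest of Section~\ref{sec::non-elastic} --- the $O(\varplayerssetdim^{(\vargamestage)}+\varAPsdim)$ per-arrival bound, the $O(\varplayerssetdim^2+\varplayerssetdim\varAPsdim)$ overall complexity, and the dynamic arrival/departure result (Theorem~\ref{theo:NE_depart}) --- which a static construction does not directly provide.
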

\begin{proof}
We use induction in the number $\varplayerssetdim$ of players in order to prove the theorem, and we denote by $\varplayerssetdim^{(\vargamestage)}=\vargamestage$ the number of MUs that are involved
in the game in induction step $\vargamestage$.

It is clear that for $\varplayerssetdim^{(1)}=1$ there is a NE, in which the only participating MU plays her best reply $\vardecision_\varplayer^*(1)$. Since there
 are no other MUs, $\vardecisionsvector^*(1)$ is a NE. Observe that if $\vardecision_\varplayer^*(1)=0$, MU $\varplayer$ would never have an incentive
to deviate from this decision, because the number of players that offload will not decrease as more MUs are added. Otherwise, if MU $\varplayer$ decides to 
offload, her best reply is given by $\vardecision_\varplayer^*(1)=\argmax_{\varAP\in\varAPs}\varchannelbandwidth_\varAP$.

Assume now that for $\vargamestage-1 > 0$ there is a NE $\vardecisionsvector^*(\vargamestage-1)$.
Upon induction step $\vargamestage$ one MU enters the game; we refer to this MU as MU $\varplayerssetdim^{(\vargamestage)}$.
Let MU $\varplayerssetdim^{(\vargamestage)}$ play her best reply $\vardecision_{\varplayerssetdim^{(\vargamestage)}}^*(\vargamestage)$
with respect to the NE strategy profile of the MUs that already participated in induction step $\vargamestage-1$, i.e.,
with respect to $\vardecision_{-\varplayerssetdim^{(\vargamestage)}}(\vargamestage)=\vardecisionsvector^*(\vargamestage-1)$.
After that, MUs can perform best improvement steps one at a time starting from the strategy profile 
$\vardecisionsvector(\vargamestage)=(\vardecision_{\varplayerssetdim^{(\vargamestage)}}^*(\vargamestage),\vardecision_{-\varplayerssetdim^{(\vargamestage)}}(\vargamestage))$,
following the algorithm shown in Figure~\ref{fig:uphase}.
We refer to this as the update phase. In order to prove that there is a NE in induction step $t$, in the following we show that the MUs will perform 
a finite number of best improvement steps in the update phase. 

Let us define the reluctance to offload via AP $\varAP$ of MU $\varplayer$ in a strategy profile $\vardecisionsvector(\vargamestage)$ as 
$\varreluctanceratio_\varplayer(\vardecisionsvector(\vargamestage))=\frac{\varcloudcost_{\varplayer,\varAP}(\vardecisionsvector(\vargamestage))}{\varlocalcost_{\varplayer}}$, and let us
rank the MUs that play the same strategy in decreasing order of reluctance.
We use the triplet $(\vargamestage,l,\varAP)$ to index the MU that in step $\vargamestage$ occupies position $l$ in the ranking for AP $\varAP$, i.e.,
$ \varreluctanceratio_{(\vargamestage,1,\varAP)}(\vardecisionsvector(\vargamestage)) \geq \varreluctanceratio_{(\vargamestage,2,\varAP)}(\vardecisionsvector(\vargamestage)) 
\geq \ldots \geq \varreluctanceratio_{(\vargamestage,\varnumberofusers_\varAP(\vardecisionsvector(\vargamestage)),\varAP)}(\vardecisionsvector(\vargamestage))$.
Note that for AP $\varAP$ it is MU $(\vargamestage,1,\varAP)$ that can gain most by changing her strategy
among all  MUs $\varplayer\in\varoffloaders_\varAP(\vardecisionsvector(\vargamestage))$.

Observe that if $\vardecision^*_{\varplayerssetdim^{(\vargamestage)}}(\vargamestage)=0$, then
$\varnumberofusers_\varAP(\vardecisionsvector(\vargamestage))=\varnumberofusers_\varAP(\vardecisionsvector^*(\vargamestage-1))$ for every $\varAP \in \varAPs$ and thus $\vardecisionsvector(t)$ is a NE. 
If $\vardecision^*_{\varplayerssetdim^{(\vargamestage)}}(\vargamestage)=\varAP\in\varAPs$, but none of the MUs want to deviate from their strategy in $\vardecisionsvector^*(\vargamestage-1)$
then $\vardecisionsvector(t)$ is a NE. Otherwise, we can have one or both of the following: 
\begin{inparaenum}[(i)]
 \item for some MUs $\varplayer\in\varoffloaders_\varAP(\vardecisionsvector(\vargamestage))$ offloading using AP $\varAP$ is not a best reply anymore,
 \item for some MUs $\varplayer\in\varoffloaders_\varoAP(\vardecisionsvector(\vargamestage))$ for $\varoAP \in \varAPs \setminus \{\varAP\}$, offloading using AP $\varoAP$ is not a best reply anymore. 
\end{inparaenum} Let us denote by $\vardeviatorsOtoL$ the set of APs with at least one MU that wants to deviate from her strategy either for (i) or for (ii).

Observe that case (i) can happen only if $\varreluctanceratio_{(\vargamestage,1,\varAP)}(\vardecisionsvector(\vargamestage)) > \varreluctanceratio_{\varplayerssetdim^{(\vargamestage)}}(\vardecisionsvector(\vargamestage))$,
as otherwise no MU $\varplayer\in\varoffloaders_\varAP(\vardecisionsvector(\vargamestage))$ would  be able to gain by changing her strategy from AP $\varAP$.
Now, since $\vardecision_{\varplayerssetdim^{(\vargamestage)}}(\vargamestage)=\varAP$ it follows that
$\frac{\varnumberofusers_\varAP(\vardecisionsvector^*(\vargamestage-1))+1}{\varchannelbandwidth_\varAP}\leq
\frac{\varnumberofusers_\varoAP(\vardecisionsvector^*(\vargamestage-1))+1}{\varchannelbandwidth_\varoAP}$ for every $\varoAP \in \varAPs \setminus \{\varAP\}$.
Therefore, in case (i) an MU $\varplayer\in\varoffloaders_\varAP(\vardecisionsvector(\vargamestage))$ cannot decrease her offloading cost by choosing another AP $\varoAP$;
as an improvement step she would change her strategy to local computing.
Let now MU $(\vargamestage,1,\varAP)$ perform an improvement step, and
let us denote the resulting strategy profile by $\vardecisionsvector^\prime(\vargamestage)$ (Line $4$).
Since MU $(\vargamestage,1,\varAP)$ changed from AP $\varAP$ to local computation, $\varnumberofusers_\varAP(\vardecisionsvector^\prime(\vargamestage))=\varnumberofusers_\varAP(\vardecisionsvector^*(\vargamestage-1))$ would hold for every $\varAP \in \varAPs$ and $\vardecisionsvector^\prime(\vargamestage)$ would be a NE.

Let us now consider case (ii). The only reason why case (ii) could happen is that the number of players that offload was incremented,
i.e.,  $\varnumberofusers(\vardecisionsvector(\vargamestage))=\varnumberofusers(\vardecisionsvector^*(\vargamestage-1))+1$. Thus, the best improvement  
of every MU $\varplayer\in\varoffloaders_\varoAP(\vardecisionsvector(\vargamestage))$ that wants to deviate would be to perform the computation locally.
Among all MUs that would like to deviate, let us choose the MU with highest reluctance $\varreluctanceratio_\varplayer(\vardecisionsvector(\vargamestage))$ (note that this is MU
$(\vargamestage,1,\varoAP)$ for some $\varoAP\not=\varAP$), and let her perform the improvement step, i.e., change to local computation (Lines $9-12$).
Let the resulting strategy profile be $\vardecisionsvector^{\prime}(\vargamestage)$.
Due to this improvement step $\varnumberofusers_\varoAP(\vardecisionsvector^{\prime}(\vargamestage))=\varnumberofusers_\varoAP(\vardecisionsvector^*(\vargamestage-1))-1$,
and thus some MUs may be able to decrease their cost by connecting to AP $\varoAP$.
If there is no MU $\varplayer\in\varplayersset\setminus\varoffloaders(\vardecisionsvector^{\prime}(\vargamestage))$ that would like to start offloading,
then there is no more MU that would like to stop offloading either because $\varnumberofusers(\vardecisionsvector^{\prime}(\vargamestage))=\varnumberofusers(\vardecisionsvector^*(\vargamestage-1))-1$.
Otherwise, among all MUs $\varplayer\in\varplayersset\setminus\varoffloaders(\vardecisionsvector^{\prime}(\vargamestage))$ that would like to start offloading,
let MU $\varoplayer$ with highest local computing cost $\varlocalcost_{\varplayer^{\prime}}$ perform an improvement step, i.e., connect to AP $\varoAP$.
We now repeat these steps starting from Line $8$ until no more MU wants to stop offloading. This iteration will stop after a finite number of steps,
as the MU that stops offloading always has higher reluctance than that one that replaces it,
and the number of MUs is finite.
Let $\varoAP$ be the AP that the last MU that stopped offloading was connected to.
If the last MU that stopped offloading was replaced by an MU that did not offload before, then we reached a NE.
Otherwise some MUs may want to change to AP $\varoAP$. By Lemma~\ref{th:swapping-finite} if we only allow MUs to change between APs,
we terminate in a finite number of improvement steps. Now, no MU wants to stop offloading, and
no MU wants to start offloading either, because they did not want to do so before the MUs were allowed to change APs. Hence we reached a NE, which
proves the inductive step.

\end{proof}
\begin{figure}[t]
\ruleline{\emph{Update phase}}
\begin{algorithmic}[1]
\If{ $\varAP \in \vardeviatorsOtoL$} 
\State /* Corresponds to case (i) */
\State Let $\varplayer^\prime \gets (\vargamestage,1,\varAP)$
\State Let $\vardecisionsvector^\prime(\vargamestage)=(0,\vardecision_{-\varplayer^\prime}(\vargamestage))$/* Best reply by MU $\varplayer^\prime$ */
\ElsIf{ $\varoAP \in \vardeviatorsOtoL$ }
\State /* Corresponds to case (ii) */
\State Let $\vardecisionsvector^\prime(\vargamestage)=\vardecisionsvector(\vargamestage)$
\While{ $\vardeviatorsOtoL \not = \emptyset$ }
\State $\varoAP \gets \displaystyle{\argmax_{\varoAP^\prime \in \vardeviatorsOtoL}}\varreluctanceratio_{(\vargamestage,1,\varoAP^{\prime})}(\vardecisionsvector^\prime(\vargamestage))$\\\hfill/* AP with MU with highest reluctance */
\State Let $\varplayer^\prime \gets (\vargamestage,1,\varoAP)$
\State Let $\vardecisionsvector^{\prime}(\vargamestage) = (0,\vardecision^\prime_{-\varplayer^\prime}(\vargamestage))$\\\hfill /* Best reply by MU $(\vargamestage,1,\varoAP)$ */
\State $\vardeviatorsLtoO = \{\varplayer | \vardecision^{\prime}_{\varplayer}(\vargamestage)=0,
\varlocalcost_\varplayer \geq \varcloudcost_{\varplayer,\varoAP}(\vardecisionsvector^{\prime}(\vargamestage))\}$
\If{ $\vardeviatorsLtoO \not= \emptyset$ }
\State $\varplayer^\prime \gets \displaystyle{\argmax_{\varplayer \in \vardeviatorsLtoO}} \hspace{0.1cm} \varlocalcost_\varplayer$\\\hfill  /*MU with highest local cost*/
\State Let $\vardecisionsvector^\prime(\vargamestage)=(\varoAP,\vardecision^\prime_{-\varplayer^\prime}(\vargamestage))$\\\hfill /* Best reply by MU $\varplayer^\prime$ */
\State $\vardeviatorsOtoL\!\!=\!\!\{\varoAP \!\!\in\! \varAPs  |\exists \varplayer \!\in\! \varoffloaders_\varoAP(\vardecisionsvector^\prime(\vargamestage)),\varcloudcost_{\varplayer,\varoAP}(\vardecisionsvector^{\prime}(\vargamestage)) \!\!\geq\! \varlocalcost_\varplayer\}$
\Else
\State $\vardeviatorsOtoO = \{\varAP | \varAP \in \varAPs \setminus \{\varoAP\},\frac{\varnumberofusers_{\varoAP}(\vardecisionsvector^{\prime}(\vargamestage))+1}
{\varchannelbandwidth_{\varoAP}}\!\!<\!\!\frac{\varnumberofusers_\varAP(\vardecisionsvector^{\prime}(\vargamestage))}{\varchannelbandwidth_\varAP}\}$          
\While{$\vardeviatorsOtoO \not = \emptyset$}
\State $\varAP \gets \displaystyle{\argmax_{\varAP^{\prime} \in \vardeviatorsOtoO}}\varreluctanceratio_{(\vargamestage,1,\varAP^{\prime})}(\vardecisionsvector^\prime(\vargamestage))$\\\hfill /* AP with MU with highest reluctance */
\State Let $\varplayer^\prime \gets (\vargamestage,1,\varAP)$
\State Let $\vardecisionsvector^\prime(\vargamestage)=(\varoAP,\vardecision^\prime_{-\varplayer^\prime}(\vargamestage))$\\\hfill /* Best reply by MU $(\vargamestage,1,\varAP)$ */
\State Let $\varoAP \gets \varAP$
\State $\vardeviatorsOtoO = \{\varAP | \varAP \in \varAPs \setminus \{\varoAP\},\frac{\varnumberofusers_{\varoAP}(\vardecisionsvector^{\prime}(\vargamestage))+1}
{\varchannelbandwidth_{\varoAP}}\!\!<\!\!\frac{\varnumberofusers_\varAP(\vardecisionsvector^{\prime}(\vargamestage))}{\varchannelbandwidth_\varAP}\}$
\EndWhile
\EndIf
\EndWhile
\EndIf
\end{algorithmic}
\rule{\columnwidth}{0.5pt}
\caption{Pseudo code of the update phase of the distributed algorithm.}\label{fig:uphase}
\vspace{-0.4cm}
\end{figure}
As we next show, the above constructive proof provides a low complexity algorithm for computing a Nash equilibrium of the game. 
\begin{proposition}
  For the computation offloading game with non-elastic cloud, when player $\varplayerssetdim^{(\vargamestage)}$ enters the game in equilibrium $\vardecisionsvector^*(\vargamestage-1)$,
  a new Nash equilibrium can be computed in $O(\varplayerssetdim^{(\vargamestage)}+\varAPsdim)$ time.
\end{proposition}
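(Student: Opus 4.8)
The plan is to charge the running time of the update phase to three disjoint sources and bound each separately: (a) computing the best reply $\vardecision^*_{\varplayerssetdim^{(\vargamestage)}}(\vargamestage)$ of the entering MU $\varplayerssetdim^{(\vargamestage)}$; (b) the improvement steps in which an MU starts or stops offloading; and (c) the improvement steps in which an MU only migrates between APs. By the threshold characterization of the best reply, source (a) reduces to finding the AP that minimizes $\varnumberofusers_\varAP(\vardecisionsvector)/\varchannelbandwidth_\varAP$ and comparing it to the threshold $\vartreshold_{\varplayerssetdim^{(\vargamestage)}}$, which is a single scan over the $\varAPsdim$ APs and costs $O(\varAPsdim)$. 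The rest of the argument concerns the update phase of Figure~\ref{fig:uphase}, where I bound the number of improvement steps and the cost per step.

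First I would bound the number of \emph{stop} steps (offloading $\to$ local) and \emph{start} steps (local $\to$ offloading). The key fact, already established in the proof of Theorem~\ref{theo:NE_ene}, is that an MU that stops offloading during the update phase is replaced only by an MU of strictly smaller reluctance and therefore never offloads again; hence each of the $\varplayerssetdim^{(\vargamestage)}$ MUs performs at most one stop and at most one start, giving $O(\varplayerssetdim^{(\vargamestage)})$ such steps. Moreover the construction reaches the AP-migration regime governed by $\vardeviatorsOtoO$ only once, namely when the last MU to stop offloading cannot be replaced by a currently local MU, i.e. $\vardeviatorsLtoO=\emptyset$; the proof of Theorem~\ref{theo:NE_ene} shows that after the ensuing migrations the profile is already a NE, so source (c) is a single swap chain. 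During that chain $\varoffloaders(\vardecisionsvector)$ is fixed, so Lemma~\ref{th:swapping-finite} applies and $\varcostvector(\vardecisionsvector)$ strictly decreases lexicographically. A sharper count comes from tracking the under-loaded AP: each migration moves an MU from an AP $\varAP$ with $(\varnumberofusers_{\varoAP}(\vardecisionsvector)+1)/\varchannelbandwidth_{\varoAP} < \varnumberofusers_\varAP(\vardecisionsvector)/\varchannelbandwidth_\varAP$ into the current under-loaded AP $\varoAP$, after which $\varAP$ becomes the new under-loaded AP with fill-ratio $\varnumberofusers_\varAP(\vardecisionsvector)/\varchannelbandwidth_\varAP$, strictly larger than before. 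Since this fill-ratio strictly increases along the chain, no AP is a source of a migration twice, so the chain has length at most $\varAPsdim$ and source (c) costs $O(\varAPsdim)$.

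It then remains to show that each step, and each update of the deviator sets $\vardeviatorsOtoL,\vardeviatorsLtoO,\vardeviatorsOtoO$ and of the ranking heads $(\vargamestage,1,\varAP)$, runs in amortized $O(1)$ time, since every improvement step changes the occupancy $\varnumberofusers_\varAP(\vardecisionsvector)$ of at most two APs by $\pm 1$ and leaves all other per-AP quantities untouched; maintaining for each AP its incident MUs with the one of highest reluctance, and the local MUs keyed by $\varlocalcost_\varplayer$, then lets each argmax and each membership test be updated incrementally rather than recomputed. \textbf{This last point is the main obstacle.} The reluctance $\varreluctanceratio_\varplayer(\vardecisionsvector)=\varcloudcost_{\varplayer,\varAP}(\vardecisionsvector)/\varlocalcost_\varplayer$ depends on both $\varnumberofusers_\varAP(\vardecisionsvector)$ and the total $\varnumberofusers(\vardecisionsvector)$ of offloaders, so the order of the MUs at an AP is not induced by a key that is fixed over the whole phase, and a naive recomputation of the deviator sets would cost $\Omega(\varplayerssetdim^{(\vargamestage)}+\varAPsdim)$ per step and destroy the bound. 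The crux is therefore to verify that the $\pm1$ perturbations are enough to keep the rankings and the three deviator sets correct under incremental maintenance, so that the total bookkeeping telescopes to $O(\varplayerssetdim^{(\vargamestage)}+\varAPsdim)$ rather than being paid afresh at every step; combining (a), (b) and (c) then yields the claimed bound.
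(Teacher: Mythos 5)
Your step-counting core is correct and is essentially the paper's own proof. The paper bounds the number of update steps exactly as you do in (b) and (c): it exhibits the worst case as roughly $\varplayerssetdim^{(\vargamestage)}$ alternating stop/start steps (every offloader stops and every local MU starts at most once, since a stopping MU is replaced only by one of lower reluctance and never offloads again), followed by at most $\varAPsdim-1$ AP-change steps, arriving at the bound $2\lfloor\frac{\varplayerssetdim^{(\vargamestage)}-1}{2}\rfloor+1+(\varAPsdim-1)$. Your justification of (c) --- the fill-ratio $\varnumberofusers_\varAP(\vardecisionsvector)/\varchannelbandwidth_\varAP$ of the under-loaded AP strictly increases along the swap chain, so no AP can be the source of two migrations --- is in fact cleaner than what the paper offers, which simply asserts that $(\varAPsdim-1)$ such steps suffice.

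The ``main obstacle'' you flag, however, is not an obstacle to the proposition as the paper states and proves it. The paper's complexity measure is the number of update steps (best-reply revisions), not RAM-model time including bookkeeping: its proof ends as soon as the step count is bounded, the ensuing corollary obtains $O(\varplayerssetdim^2+\varplayerssetdim\varAPsdim)$ by summing these per-arrival step counts, and the numerical section likewise reports ``number of iterations.'' Under this measure your items (a)--(c) already complete the proof; you should state the measure explicitly rather than leave the amortized-bookkeeping question open as if the argument were unfinished. That said, your worry is mathematically well-founded if one insists on a strict machine model: the reluctance comparison between two MUs at the same AP depends on both $\varnumberofusers_\varAP(\vardecisionsvector)$ and the total $\varnumberofusers(\vardecisionsvector)$, with MU-specific coefficients, so the sign of $\varreluctanceratio_\varplayer(\vardecisionsvector)-\varreluctanceratio_\varoplayer(\vardecisionsvector)$ can flip as occupancies change; the per-AP ranking is therefore not induced by any static key, and amortized $O(1)$ maintenance is not obviously attainable. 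But the paper's proof does not address this either, so resolving it is beyond what the claimed statement requires.
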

\begin{proof}
Let us consider inductive step $\vargamestage$ in which MU $\varplayerssetdim^{(\vargamestage)}$ enters the game. From the proof of Theorem~\ref{theo:NE_ene}
it follows that if $\vardecision^*_{\varplayerssetdim^{(\vargamestage)}}(\vargamestage)=0$, or if $\vardecision^*_{\varplayerssetdim^{(\vargamestage)}}(\vargamestage)=\varAP\in\varAPs$
but none of the MUs want to deviate from their strategy in $\vardecisionsvector^*(\vargamestage-1)$, then a NE is reached without any update steps.
If $\vardecision^*_{\varplayerssetdim^{(\vargamestage)}}(\vargamestage)=\varAP\in\varAPs$ and case (i) happens, a NE is reached after one update step.
Now let us consider that $\vardecision^*_{\varplayerssetdim^{(\vargamestage)}}(\vargamestage)=\varAP\in\varAPs$ and case (ii) happens. Note that from Theorem~\ref{theo:NE_ene} 
it follows that case (ii) can happen only if $\varAPsdim>1$.
In what follows we characterize the longest sequences of update steps 
that lead to a NE for the case when $\varplayerssetdim^{(\vargamestage)}$ is even and when it is odd. 

If $\varplayerssetdim^{(\vargamestage)}$ is even, the worst case scenario is when 
$|\varoffloaders(\vardecisionsvector^{*}(\vargamestage-1))|=\lceil\frac{\varplayerssetdim^{(\vargamestage)}-1}{2}\rceil$ and $\varnumberofusers_\varAP(\vardecisionsvector^{*}(\vargamestage-1))=0$,
in the new strategy profile $\vardecisionsvector(\vargamestage)$ every MU $\varplayer\in\varoffloaders(\vardecisionsvector^{*}(\vargamestage-1))$ 
wants to change to local computing, and when MU $\varplayer$ with highest reluctance $\varreluctanceratio_\varplayer(\vardecisionsvector(\vargamestage))$ changes to local computing, all MUs 
$\varplayer\in\varplayersset\setminus\varoffloaders(\vardecisionsvector^{*}(\vargamestage-1))$, i.e., a total of $\lfloor\frac{\varplayerssetdim^{(\vargamestage)}-1}{2}\rfloor$ MUs 
would like to start offloading. In the corresponding  sequence of update steps that leads to a NE,  in the first $2\lfloor\frac{\varplayerssetdim^{(\vargamestage)}-1}{2}\rfloor+1$ 
update steps all MUs $\varplayer\in\varoffloaders(\vardecisionsvector^{*}(\vargamestage-1))$ stop to offload and all MUs $\varplayer\in\varplayersset\setminus\varoffloaders(\vardecisionsvector^{*}(\vargamestage-1))$ 
start to offload, and in the next $(\varAPsdim-1)$ update steps $(\varAPsdim-1)$ MUs change between APs.
Therefore, a NE is reached after at most $2\lfloor\frac{\varplayerssetdim^{(\vargamestage)}-1}{2}\rfloor+1+(\varAPsdim-1)$ update steps.

If $\varplayerssetdim^{(\vargamestage)}$ is odd, the worst case scenario is when $\varnumberofusers_\varAP(\vardecisionsvector^{*}(\vargamestage-1))=1$,
in the new strategy profile $\vardecisionsvector(\vargamestage)$ a total of 
$\lfloor\frac{\varplayerssetdim^{(\vargamestage)}-1}{2}\rfloor$ MUs of the $|\varoffloaders(\vardecisionsvector^{*}(\vargamestage-1))|=\lfloor\frac{\varplayerssetdim^{(\vargamestage)}-1}{2}\rfloor+1$ MUs
that offload want to change to local computing, and when MU $\varplayer$ with highest reluctance $\varreluctanceratio_\varplayer(\vardecisionsvector(\vargamestage))$ changes to local computing, all MUs 
$\varplayer\in\varplayersset\setminus\varoffloaders(\vardecisionsvector^{*}(\vargamestage-1))$, i.e., a total of $\lfloor\frac{\varplayerssetdim^{(\vargamestage)}-1}{2}\rfloor-1$ MUs 
would like to start offloading. Following the same reasoning as above, we obtain that a NE is reached after at most
$2(\lfloor\frac{\varplayerssetdim^{(\vargamestage)}-1}{2}\rfloor-1)+1+(\varAPsdim-1)$ update steps.

\end{proof}
Consider now that we add players one at a time, we then obtain the following bound on the complexity of computing a NE.
\begin{corollary}
A Nash equilibrium of the computation offloading game with non-elastic cloud can be computed in $O(\varplayerssetdim^2+\varplayerssetdim\varAPsdim)$ time.
\end{corollary}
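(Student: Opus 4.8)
The plan is to build a Nash equilibrium incrementally, exactly as in the constructive proof of Theorem~\ref{theo:NE_ene}: I would start from the game with a single MU and add the MUs one at a time, maintaining a Nash equilibrium after each insertion. The base case with one player is a NE that can be found in $O(\varAPsdim)$ time, since the lone MU simply compares local computing against offloading via the AP with the largest bandwidth $\argmax_{\varAP\in\varAPs}\varchannelbandwidth_\varAP$. For the inductive step, whenever player $\varplayerssetdim^{(\vargamestage)}$ is added to a game that is already in equilibrium $\vardecisionsvector^*(\vargamestage-1)$, the update phase restores a Nash equilibrium, and the Proposition established above guarantees that this costs $O(\varplayerssetdim^{(\vargamestage)}+\varAPsdim)=O(\vargamestage+\varAPsdim)$ time, using $\varplayerssetdim^{(\vargamestage)}=\vargamestage$.

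It then remains to sum the per-insertion costs over all $\varplayerssetdim$ insertions. Carrying out the arithmetic,
\begin{equation}\nonumber
\sum_{\vargamestage=1}^{\varplayerssetdim} O(\vargamestage + \varAPsdim) = O\Big(\sum_{\vargamestage=1}^{\varplayerssetdim}\vargamestage + \varplayerssetdim\,\varAPsdim\Big) = O\Big(\tfrac{\varplayerssetdim(\varplayerssetdim+1)}{2} + \varplayerssetdim\,\varAPsdim\Big) = O(\varplayerssetdim^2 + \varplayerssetdim\varAPsdim),
\end{equation}
which is precisely the claimed bound.

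The argument is essentially a telescoping accumulation of the per-step cost, so I do not expect any genuine obstacle; the only point requiring care is that the Proposition assumes the game is already in a Nash equilibrium before the new player enters, so the induction must be threaded so that each insertion starts from the equilibrium produced by the previous insertion. One secondary detail worth verifying is that the auxiliary data structures needed to realize the per-step bound of the Proposition (in particular the reluctance rankings $\varreluctanceratio_\varplayer$ used to select the MU $(\vargamestage,1,\varAP)$ with highest reluctance in the update phase) can be maintained across insertions within the stated budget. This is, however, already subsumed by the cost accounting in the Proposition, so no new work is needed beyond invoking that result and summing.
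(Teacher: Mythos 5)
Your proposal is correct and follows exactly the route the paper intends: the corollary is obtained by adding the MUs one at a time, invoking the Proposition's $O(\varplayerssetdim^{(\vargamestage)}+\varAPsdim)$ per-insertion bound at each step, and summing over the $\varplayerssetdim$ insertions to get $O(\varplayerssetdim^2+\varplayerssetdim\varAPsdim)$. The paper leaves this summation implicit, so your explicit base case and telescoping accumulation are just a more detailed write-up of the same argument.
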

So far we have shown that starting from a NE and adding a new player, a new NE can be computed. We now show a similar result for the case
when a player leaves.
\begin{theorem}
\label{theo:NE_depart}
Consider  the computation offloading game with  non-elastic cloud, and assume the system is in a NE. If an existing player leaves the game and the remaining
players play  best replies, they converge to a Nash equilibrium after a finite number of updates.
\end{theorem}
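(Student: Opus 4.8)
The plan is to mirror the structure of the proof of Theorem~\ref{theo:NE_ene}, classifying the improvement steps available after the departure into three types — an offloader switching access points (O$\to$O), a local user starting to offload (L$\to$O), and an offloader reverting to local computation (O$\to$L) — and showing that a best-reply path can contain only finitely many of each. Denote by $\vardecisionsvector^*$ the NE before the departure and let $\varplayer$ be the player that leaves. If $\vardecision_\varplayer^*=0$, then removing $\varplayer$ changes neither any $\varnumberofusers_\varAP$ nor the total number of offloaders $\varnumberofusers(\vardecisionsvector)$; by~(\ref{eq:offloading_cost_2ene}) every remaining player's cost and best reply are unchanged, so the profile is still a NE and no updates are needed. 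Hence I would focus on the case $\vardecision_\varplayer^*=\varAP\in\varAPs$.

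The first observation I would make is the immediate effect of the departure: both $\varnumberofusers_\varAP$ and $\varnumberofusers(\vardecisionsvector)$ drop by one, so by~(\ref{eq:offloading_cost_2ene}) the offloading cost of every remaining offloader weakly decreases (strictly for those still at $\varAP$), while all local costs $\varlocalcost_\varplayer$ are unaffected. Consequently, right after the departure no offloader has a profitable move to local computation, and the only profitable deviations are O$\to$O steps toward the relieved AP $\varAP$ and L$\to$O steps by local users for whom offloading has just become cheaper. I would then reduce the whole theorem to a single claim: along any best-reply path starting from the post-departure profile, no O$\to$L step is ever a best reply. Granting this claim, the total offload count $\varnumberofusers(\vardecisionsvector)$ is non-decreasing, since it changes only through L$\to$O and O$\to$L steps, and it is bounded above by the number of remaining players; hence it is eventually constant, after which only O$\to$O steps remain, and these terminate by Lemma~\ref{th:swapping-finite}. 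Since each L$\to$O step strictly increases $\varnumberofusers(\vardecisionsvector)$ and each maximal O$\to$O phase between two such steps is finite, the whole path is finite.

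To establish the claim that no O$\to$L step occurs, I would maintain the invariant that every current offloader's cost stays at or below its value in $\vardecisionsvector^*$; since in $\vardecisionsvector^*$ each offloader weakly preferred offloading to local computation and the local cost is fixed, the invariant forbids any strict O$\to$L deviation. The two quantities controlling an offloader's cost in~(\ref{eq:offloading_cost_2ene}) are the per-AP congestion $\varnumberofusers_\varAP/\varchannelbandwidth_\varAP$ and the cloud term governed by the total count $\varnumberofusers(\vardecisionsvector)$. The crux is therefore to show that $\varnumberofusers(\vardecisionsvector)$ never exceeds $\varnumberofusers(\vardecisionsvector^*)$ and that the congestion felt by an original offloader never rises above its NE level. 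For the count bound I would argue that an L$\to$O step raising $\varnumberofusers(\vardecisionsvector)$ to $\varnumberofusers(\vardecisionsvector^*)$ is a best reply only once the occupancy of every AP matches that of $\vardecisionsvector^*$, which follows by comparing the offloading cost in~(\ref{eq:offloading_cost_2ene}) with the NE inequalities satisfied by $\vardecisionsvector^*$; in that aligned configuration no local user finds offloading profitable, so the count cannot climb further. For the rebalancing I would reuse the reluctance ordering $\varreluctanceratio_\varplayer$ of Theorem~\ref{theo:NE_ene}: whenever a user stops offloading it is the one of highest reluctance at its AP, and any user that subsequently starts offloading there has strictly smaller reluctance, so the sequence of start/stop steps cannot cycle.

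The hard part, and the place where the cyclic example of Figure~\ref{fig::example_cycle} must be ruled out, is exactly the bound $\varnumberofusers(\vardecisionsvector)\le\varnumberofusers(\vardecisionsvector^*)$ together with the per-AP control, because a best-reply path may perform an L$\to$O step that piles a new offloader onto an AP already at its NE occupancy whenever the cloud has been sufficiently relieved, which threatens to push an original offloader's cost above its local cost and re-enable an O$\to$L step — precisely the O/L oscillation underlying the cycle. I expect the resolution to require combining the reluctance argument with Lemma~\ref{th:swapping-finite}, so that O$\to$O rebalancing is always completed before the count can overshoot, thereby keeping the configuration aligned up to a single deficit until the count returns to $\varnumberofusers(\vardecisionsvector^*)$, at which point the profile is a NE. Establishing this interaction rigorously — rather than the two monotonicity statements in isolation — is the main obstacle.
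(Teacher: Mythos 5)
Your argument hinges on a single claim---that along any best-reply path started after the departure no O$\to$L step is ever a best reply---and you never establish it; your last paragraph explicitly defers it as ``the main obstacle.'' That is not a fixable presentation issue but the substance of the theorem, and the claim is in fact not true for arbitrary best-reply paths. Immediately after an offloader leaves AP $i$ it is correct that nobody wants to stop offloading, but once a local MU performs an L$\to$O step nothing forces its best reply to be the vacated AP: the NE condition of the departed player only yields $n_i(\vardecisionsvector^*)/\varchannelbandwidth_i \leq (n_m(\vardecisionsvector^*)+1)/\varchannelbandwidth_m$ for $m\neq i$, which can hold with equality, so the joiner may attach to some $m\neq i$. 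Then the total count returns to $\varnumberofusers(\vardecisionsvector^*)$ while the congestion at $m$ rises to $(n_m(\vardecisionsvector^*)+1)/\varchannelbandwidth_m$, strictly above its NE level; an offloader at $m$ whose NE cost was at its local cost then strictly prefers to stop offloading, i.e., an O$\to$L step occurs. With it fails the monotonicity of $\varnumberofusers(\vardecisionsvector)$ on which your entire termination argument (eventual constancy of the count, then Lemma~\ref{th:swapping-finite}) rests.

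The paper's proof takes the opposite route: it does not forbid O$\to$L steps, it reuses the machinery that already handles them. A departing offloader is indistinguishable, from the point of view of the remaining players, from a player who performs an O$\to$L step and thereafter never has an incentive to offload again; this is exactly the situation resolved inside the proof of Theorem~\ref{theo:NE_ene} by the scheduled update phase of Figure~\ref{fig:uphase}. There, the vacated AP is refilled by the local MU with the highest local cost, this may trigger further O$\to$L steps, but each MU that stops offloading has strictly higher reluctance $\varreluctanceratio$ than the MU that replaced it, so the alternating O$\to$L/L$\to$O sequence is finite over the finite player set; the residual O$\to$O rebalancing then terminates by Lemma~\ref{th:swapping-finite}. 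Note that this establishes convergence for a particular orchestrated sequence of best replies---which is all the theorem requires---rather than for arbitrary best-reply dynamics; that scheduling (who moves, and to which AP) is precisely the degree of freedom your approach gives up and then cannot recover.
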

\begin{proof}
 Let us consider that player $\varplayer$ leaves the game, when the system is in a NE. 
 If the  strategy of player $\varplayer$ is to perform local computation, none of the remaining players would be affected when player $\varplayer$ leaves. 
 If the strategy of player $\varplayer$ is to offload using one of the APs, we can consider player $\varplayer$ as a player that after
 changing his strategy from offloading to local computing, would have no incentive to offload again. Recall from the
 proof of Theorem~\ref{theo:NE_ene} that when a player changes her strategy from offloading to local computing the game converges to a Nash equilibrium after a finite number of updates.
 This proves the theorem.  
\end{proof}
Observe that Theorem~\ref{theo:NE_ene} and Theorem \ref{theo:NE_depart} allow for efficient computation of equilibrium system operation if the time between user arrivals and departures
is sufficient to compute a new equilibrium. Furthermore, the computation can be done in a decentralized manner, by letting MUs perform best improvements one at a time. The advantage of such a decentralized
implementation could be that MUs do not have to reveal their parameters.

\section{Price of Anarchy}
\label{sec::poa}
We have so far shown that NE exist and provided low complexity algortihms for computing a NE. We now address the important question
how far the system performance would be from optimal in a NE. To quantify the difference from the
optimal performance we use the price of anarchy (PoA), defined as the ratio of the worst case NE cost and the minimal
cost
\begin{equation}
  PoA =\frac{\max\limits_{\vardecisionsvector^{*}}\sum_{\varplayer \in \varplayersset}\varcostfunction_\varplayer(\vardecisionsvector^{*})}
 {\min\limits_{\vardecisionsvector\in\vardecisionsset}\sum_{\varplayer \in \varplayersset}\varcostfunction_\varplayer(\vardecisionsvector)}.
\end{equation}
In what follows we give an upper bound on the PoA.
\begin{theorem}
  The price of anarchy for the computation offloading game is upper bounded by
$$\frac{\sum_{\varplayer \in \varplayersset}\varlocalcost_\varplayer}{\sum_{\varplayer\in\varplayersset}\min\{\varlocalcost_\varplayer,\bar{\varcloudcost_{\varplayer,1}},...,
 \bar{\varcloudcost_{\varplayer,\varAPsdim}}\}},$$ 
both in the case of elastic cloud and in the case of non-elastic cloud. 
\end{theorem}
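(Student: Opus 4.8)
The plan is to bound the numerator and the denominator of the PoA ratio separately, using two elementary per-player inequalities that hold uniformly in both cloud models. I first interpret $\bar{\varcloudcost_{\varplayer,\varAP}}$ as the smallest offloading cost that MU $\varplayer$ could ever incur via AP $\varAP$, namely the value of $\varcloudcost_{\varplayer,\varAP}$ when she is the only MU using AP $\varAP$ (and, in the non-elastic case, the only MU offloading at all), i.e. $\bar{\varcloudcost_{\varplayer,\varAP}}=(\vartimeweight_\varplayer+\varenergyweight_\varplayer\varpower_\varplayer)\frac{\vardatasize_\varplayer}{\varchannelbandwidth_\varAP}+\vartimeweight_\varplayer\frac{\varCPUcyclesnumber_\varplayer}{\varcloudcapability}$. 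This expression coincides for the elastic model (where $\varcloudcapability_\varplayer=\varcloudcapability$) and for the non-elastic model evaluated at $\varnumberofusers(\vardecisionsvector)=1$, which is why a single bound will cover both cases.

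For the numerator, I would argue that in any NE $\vardecisionsvector^*$ every MU pays at most her local-computing cost. Since the strategy $\vardecision_\varplayer=0$ is always available and its cost $\varlocalcost_\varplayer$ is independent of the other MUs' strategies, the best-reply property of a NE gives $\varcostfunction_\varplayer(\vardecisionsvector^*)\leq\varlocalcost_\varplayer$ for every $\varplayer$: an offloading MU whose cost exceeded $\varlocalcost_\varplayer$ would strictly gain by switching to local computing, contradicting equilibrium, while a locally computing MU pays exactly $\varlocalcost_\varplayer$. Summing over $\varplayer$ and taking the worst case over equilibria yields $\max_{\vardecisionsvector^*}\sum_{\varplayer}\varcostfunction_\varplayer(\vardecisionsvector^*)\leq\sum_{\varplayer}\varlocalcost_\varplayer$, which is exactly the numerator of the claimed bound.

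For the denominator, I would lower-bound the social optimum by the sum of per-player minimum achievable costs. In an arbitrary profile $\vardecisionsvector$, MU $\varplayer$ either computes locally, paying $\varlocalcost_\varplayer$, or offloads via some AP $\varAP$, paying $\varcloudcost_{\varplayer,\varAP}(\vardecisionsvector)$. Because $\varcloudcost_{\varplayer,\varAP}$ is nondecreasing in $\varnumberofusers_\varAP(\vardecisionsvector)$ (and, for the non-elastic model, also in $\varnumberofusers(\vardecisionsvector)$), it is bounded below by $\bar{\varcloudcost_{\varplayer,\varAP}}$. Hence $\varcostfunction_\varplayer(\vardecisionsvector)\geq\min\{\varlocalcost_\varplayer,\bar{\varcloudcost_{\varplayer,1}},\dots,\bar{\varcloudcost_{\varplayer,\varAPsdim}}\}$ for every profile, so in particular the minimizing profile satisfies $\min_{\vardecisionsvector}\sum_{\varplayer}\varcostfunction_\varplayer(\vardecisionsvector)\geq\sum_{\varplayer}\min\{\varlocalcost_\varplayer,\bar{\varcloudcost_{\varplayer,1}},\dots,\bar{\varcloudcost_{\varplayer,\varAPsdim}}\}$. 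Dividing the numerator bound by the denominator bound gives the claimed upper bound on the PoA.

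The argument is essentially routine; the only point demanding care is establishing that both inequalities hold identically for the elastic and non-elastic models, which reduces to checking that the offloading cost is monotone in congestion in each model and that the congestion-free value $\bar{\varcloudcost_{\varplayer,\varAP}}$ is the same in both. I therefore expect the main (and rather minor) obstacle to be the clean, unified treatment of the two cloud models rather than any genuine difficulty in the estimates themselves.
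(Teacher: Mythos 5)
Your proposal is correct and follows essentially the same route as the paper's proof: the numerator is bounded by $\sum_{\varplayer}\varlocalcost_\varplayer$ via the equilibrium best-reply property against the always-available local-computing strategy, and the denominator is bounded below by $\sum_{\varplayer}\min\{\varlocalcost_\varplayer,\bar{\varcloudcost_{\varplayer,1}},\dots,\bar{\varcloudcost_{\varplayer,\varAPsdim}}\}$ using the congestion-free offloading cost, which is identical in both cloud models. No gaps; this matches the paper's argument step for step.
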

\begin{proof}
  First we show that if there is a NE in which all players perform local computation then it is the worst case NE.
 To show this let $\vardecisionsvector^{*}$ be an arbitrary NE. 
 Observe that $\varcostfunction_{\varplayer}(\vardecision_{\varplayer}^{*},\vardecision_{-\varplayer}^{*})
 \leq \varlocalcost_\varplayer$ holds for every player $\varplayer \in \varplayersset$. Otherwise, if $\exists \varplayer \in \varplayersset$ such that
 $\varcostfunction_{\varplayer}(\vardecision_{\varplayer}^{*},\vardecision_{-\varplayer}^{*}) > \varlocalcost_\varplayer$, player $\varplayer$ would have an incentive to deviate from 
 decision $\vardecision_{\varplayer}^{*}$, which contradicts our initial assumption that $\vardecisionsvector^{*}$ is a NE. Thus in any NE 
 $\sum_{\varplayer\in\varplayersset}\varcostfunction_\varplayer(\vardecision_\varplayer^{*},\vardecision_{-\varplayer}^{*})\leq\sum_{\varplayer\in\varplayersset}\varlocalcost_\varplayer$
 holds, and if all players performing  local computation is a NE then it is the worst case NE.\\
 Now we derive a lower bound for the optimal solution of the computation offloading game in the case of both the elastic and non-elastic cloud. Let us consider  an arbitrary decision profile 
 $(\vardecision_\varplayer,\vardecision_{-\varplayer})\in\vardecisionsset$. If $\vardecision_\varplayer=0$, then $\varcostfunction_\varplayer(\vardecision_\varplayer,\vardecision_{-\varplayer})=\varlocalcost_\varplayer$. 
 Otherwise, if $\vardecision_\varplayer=\varAP$ for some $\varAP\in\varAPs$, we have that in the best case $\vardecision_\varoplayer=0$ for every $\varoplayer\in\varplayersset\setminus\{\varplayer\}$,
and thus $\varnumberofusers(\vardecisionsvector)=1$. Therefore, $\varuplinkrate_{\varplayer}^{\varAP}(\vardecision_\varplayer,\vardecision_{-\varplayer})\leq \varchannelbandwidth_\varAP$ and 
 $\varcloudcapability_\varplayer\leq\varcloudcapability$, which implies that
 \begin{align}
  \nonumber & \varcloudcost_{\varplayer,\varAP}(\vardecision_\varplayer,\vardecision_{-\varplayer}) = 
  (\vartimeweight_\varplayer+\varenergyweight_\varplayer\varpower_\varplayer)\frac{\vardatasize_\varplayer}{\varuplinkrate_{\varplayer}^{\varAP}(\vardecision_\varplayer,\vardecision_{-\varplayer})} + 
  \vartimeweight_\varplayer\frac{\varCPUcyclesnumber_\varplayer}{\varcloudcapability_\varplayer}\\ \nonumber
  & \geq (\vartimeweight_\varplayer+\varenergyweight_\varplayer\varpower_\varplayer)\frac{\vardatasize_\varplayer}{\varchannelbandwidth_\varAP} 
  + \vartimeweight_\varplayer\frac{\varCPUcyclesnumber_\varplayer}{\varcloudcapability}=\bar{\varcloudcost_{\varplayer,\varAP}}.\nonumber
 \end{align}
Hence, we have   $\varcostfunction_\varplayer(\vardecision_\varplayer,\vardecision_{-\varplayer})\!\geq\!\min\{\varlocalcost_\varplayer,\bar{\varcloudcost_{\varplayer,1}},...,\bar{\varcloudcost_{\varplayer,\varAPsdim}}\}$
and $\sum_{\varplayer\in\varplayersset}\varcostfunction_\varplayer(\vardecision_\varplayer,\vardecision_{-\varplayer})
\!\geq\! \sum_{\varplayer \in \varplayersset} \min\{\varlocalcost_\varplayer,\bar{\varcloudcost_{\varplayer,1}},...,\bar{\varcloudcost_{\varplayer,\varAPsdim}}\}$. Using these we can establish the following bound
\begin{align}
 \nonumber PoA\!=\!\frac{\max\limits_{\vardecisionsvector^{*}}\!\sum_{\varplayer \in \varplayersset}\!\varcostfunction_\varplayer(\vardecisionsvector^{*})}
 {\min\limits_{\vardecisionsvector\in\vardecisionsset}\!\sum_{\varplayer \in \varplayersset}\!\varcostfunction_\varplayer(\vardecisionsvector)}
 \!\leq\!\frac{\!\sum_{\varplayer \in \varplayersset}\!\varlocalcost_\varplayer}{\!\sum_{\varplayer\in\varplayersset}\!\min\{\varlocalcost_\varplayer,\!\bar{\varcloudcost_{\varplayer,1}}\!,\!...,
 \!\bar{\varcloudcost_{\varplayer,\varAPsdim}}\}},
\end{align}
which proves the theorem.
\end{proof}

\section{Numerical Results}
\label{sec::numerical}
\begin{figure*}[tb]
\begin{minipage}{0.485\textwidth}
\vspace{-0.2cm}
\begin{center}
  \includegraphics[width=\columnwidth]{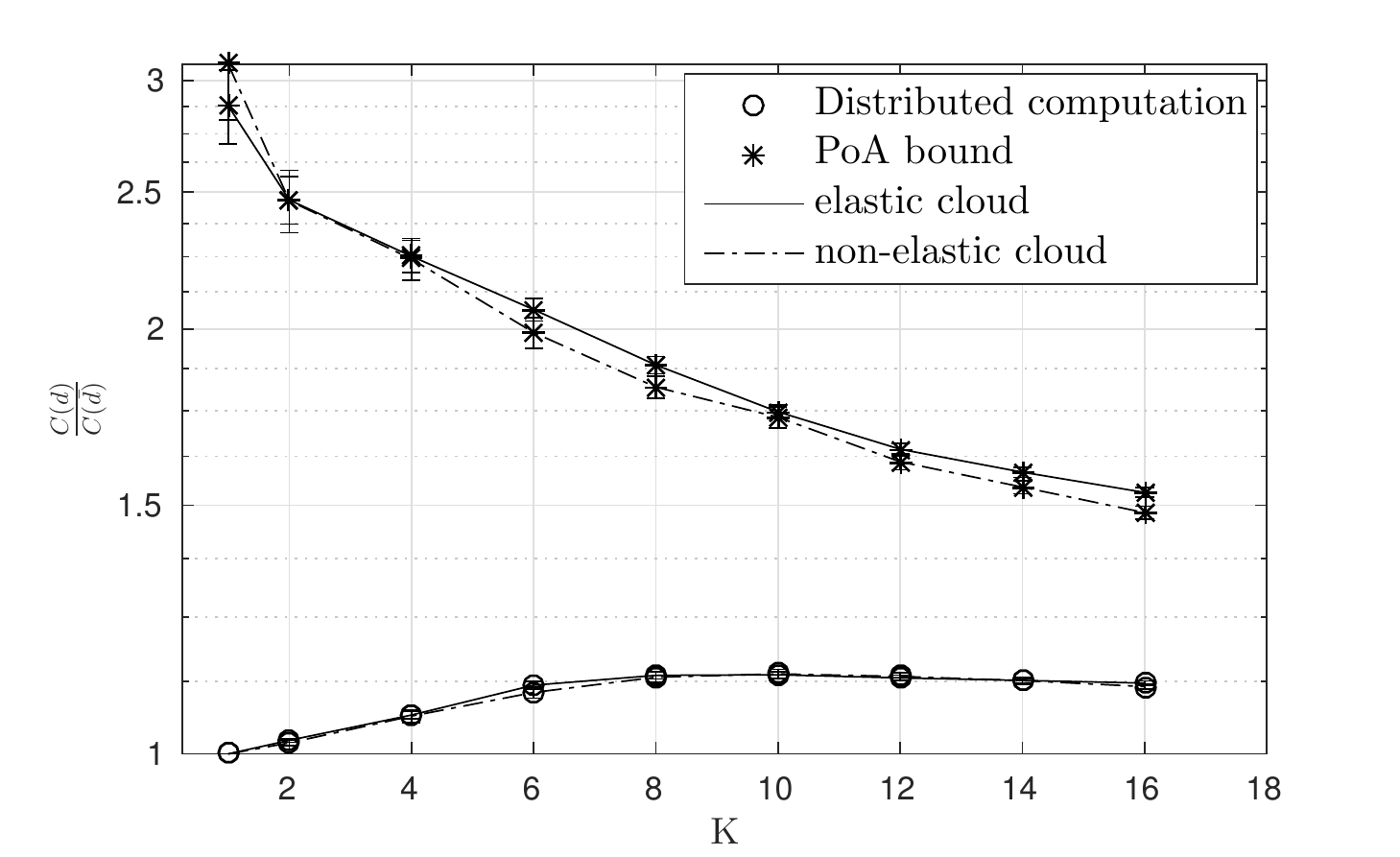}
  \caption{Ratio between the total cost achieved by the proposed distributed algorithm and the optimal total cost for the elastic and non-elastic cloud,
  $\varAPsdim = 3$. The results shown are the averages of 500 simulations, together with 95\% confidence intervals.}
  \label{fig:total_cost}
\end{center}
\vspace{-0.8cm}
\end{minipage}
\hspace{0.015\textwidth}
\begin{minipage}{0.485\textwidth}
\vspace{-0.4cm}
 \begin{center}
  \includegraphics[width=\columnwidth]{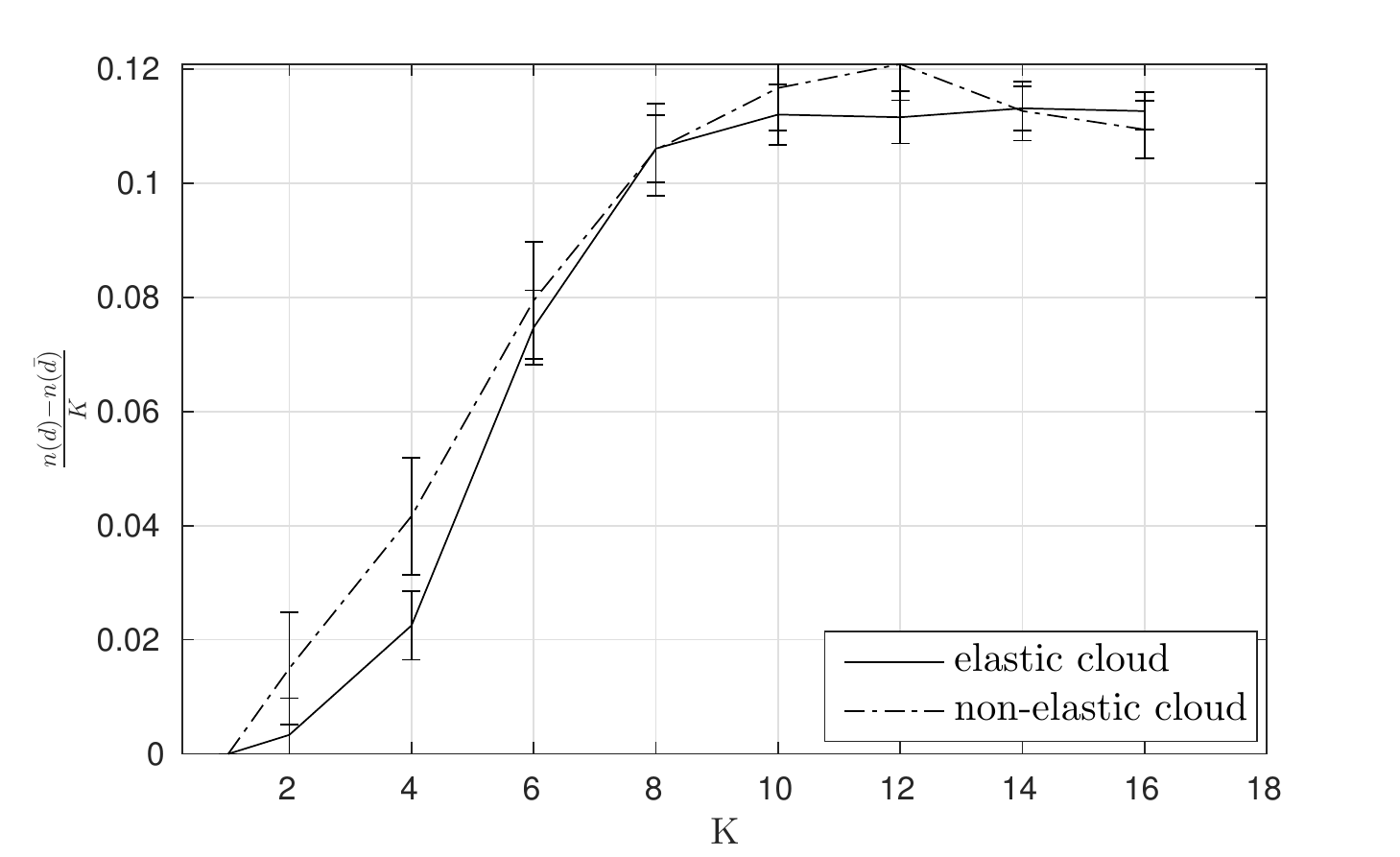}
  \caption{Offloading ratio vs. number of users $\varplayerssetdim$ for the elastic and non-elastic cloud,
  $\varAPsdim=3$. The results shown are the averages of 500 simulations, together with 95\% confidence intervals.}
  \label{fig:num_offlo}
  \end{center}
\vspace{-0.8cm}
\end{minipage}
\end{figure*}
\begin{figure}[tb]
\vspace{-0.2cm}
 \begin{center}
  \includegraphics[width=\columnwidth]{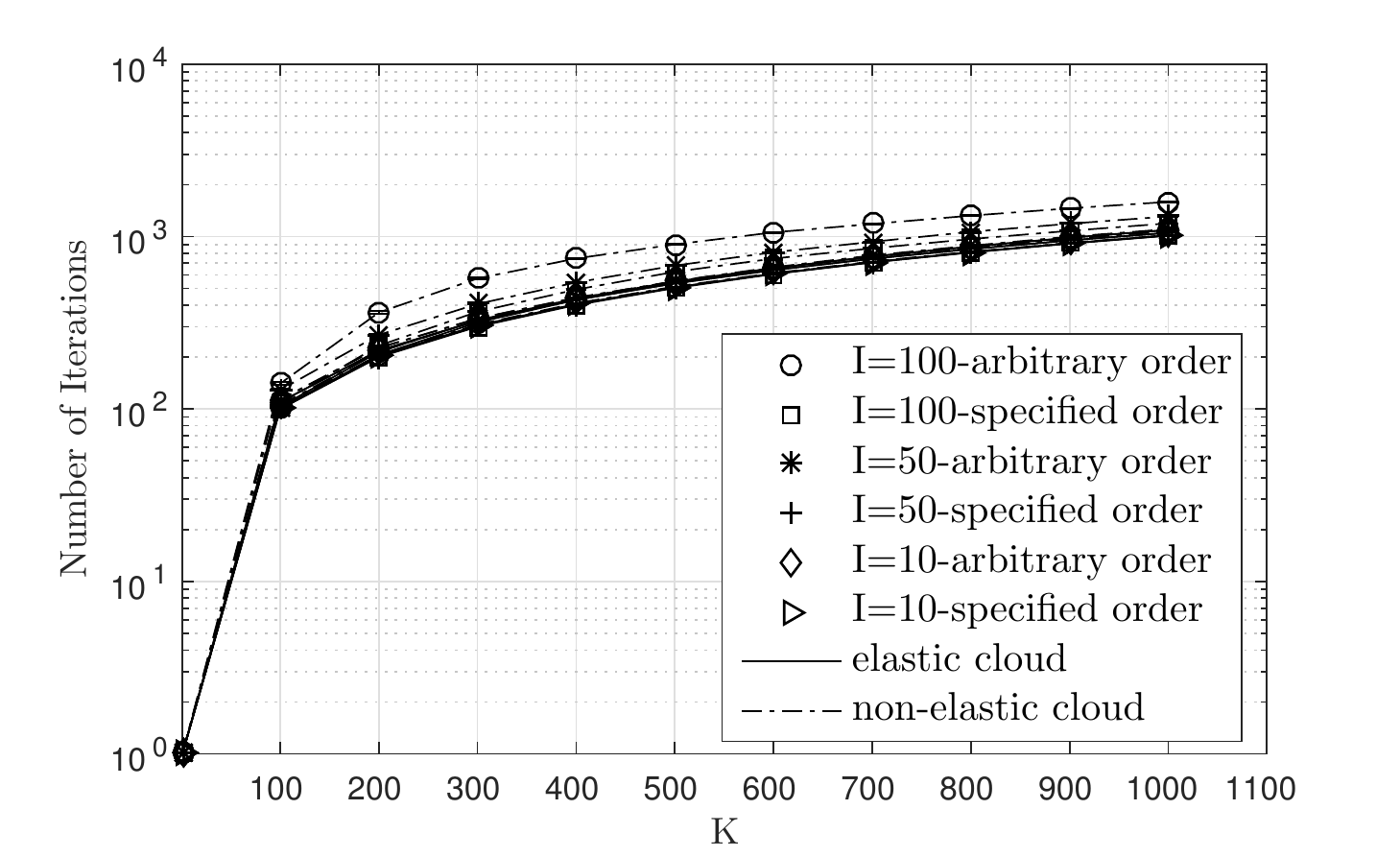}
  \caption{Number of iterations vs. number of users $\varplayerssetdim$ for the elastic and non-elastic cloud,
  $\varAPsdim = $10, 50 and 100. The results shown are the averages of 100 simulations, together with 95\% confidence intervals.}
  \label{fig:num_iter}
  \end{center}
 \vspace{-0.9cm}
\end{figure}
We use simulations to evaluate the cost performance and the computational time of the proposed distributed algorithms.
\subsection{Evaluation Scenario}
In all configurations, we consider that the bandwidth of each AP is drawn from a normal distribution with mean $\mu=5$ \textit{MHz} and standard deviation of $0.2\mu$. The parameters 
 $<\!\!\vardatasize_\varplayer,\varCPUcyclesnumber_\varplayer\!\!>$ that characterize the computation tasks, the computational capability
of MU $\varusercapability_\varplayer$ and the weights attributed to energy consumption $\varenergyweight_\varplayer$ and the time it takes to finish the computation 
$\vartimeweight_\varplayer$ were drawn 
from a continuous  uniform distribution with parameters $[0.42, 2]$Mb, $[0.1,0.8]$~\textit{Gigacycles}, $[0.5, 1]$~\textit{Gigacycles}, $[0,1]$ and $[0,1]$, respectively.
The consumed energy per CPU cycle $\varenergyconstant_\varplayer$ was set to $10^{-11}(\varusercapability_{\varplayer})^{2}$ according to
measurements reported in~\cite{Wen12Infocom,miettinenenergy}. The data transmit power $\varpower_\varplayer$ was set to 0.4\textit{W} according to~\cite{balasubramanian2009energy}. 
In the case of the non-elastic cloud, the computation capability of the cloud $\varcloudcapability$ was set to $100$~\textit{Gigacycles}~\cite{soyata2012cloud} and in the case of the elastic cloud each MU that
offloads receives $\varcloudcapability_\varplayer= 100$~\textit{Gigacycles} amount of computing power.

In order to evaluate the cost performance of the equilibrium strategy profile $\vardecisionsvector^*$ computed by the proposed distributed algorithms, 
we computed the optimal strategy profile $\bar{\vardecisionsvector}$ that minimizes the total cost, i.e.,
$\bar{\vardecisionsvector}=\arg\min_\vardecisionsvector\sum_{\varplayer\in\varplayersset}{\varcostfunction_{\varplayer}(\vardecisionsvector)}$.
Furthermore, as a baseline for comparison we use the system cost that can be achieved in the strategy profile
in which all MUs execute their computation tasks locally, which coincides with the bound on the PoA.

\subsection{Price of Anarachy}
Figure~\ref{fig:total_cost} shows the cost ratio $\varcostfunction(\vardecisionsvector^*)/\varcostfunction(\bar{\vardecisionsvector})$ in the case of the
elastic as well as in the case of the non-elastic cloud. To make the computation of the optimal strategy profile $\bar{\vardecisionsvector}$ feasible, we considered
a scenario with $\varAPsdim=3$ APs and we show the cost ratio $\varcostfunction(\vardecisionsvector^{*})/\varcostfunction(\bar{\vardecisionsvector})$ as a function of the number of MUs. 
Figure~\ref{fig:total_cost} shows that the results reached by the algorithms are close to the optimal results and the difference between the elastic cloud case and
the non-elastic cloud case is negligible. Furthermore, we can observe that the cost ratio increases slightly up to $\varplayerssetdim=6$ MUs and from that point it remains fairly unchanged.
This is due to the number of MUs that choose to offload, as we will see later.
The upper bound on the PoA, which is also shown in Figure~\ref{fig:total_cost}, additionally confirms that the proposed distributed algorithms perform good in terms of the cost ratio. It is interesting to note that the gap between the PoA bound and the actual cost ratio decreases with increasing number of MUs.

To get insight into the structure of the equilibrium strategy profile $\vardecisionsvector^{*}$ computed by the distributed algorithms,
we compare the number of MUs that offload in equilibrium and the number of MUs that offload in the optimal strategy profile $\bar{\vardecisionsvector}$,
by computing the offloading difference ratio $(\varnumberofusers(\vardecisionsvector^{*})-\varnumberofusers(\bar{\vardecisionsvector}))/\varplayerssetdim$. 
Figure~\ref{fig:num_offlo} shows the offloading difference ratio
corresponding to the results shown in Figure~\ref{fig:total_cost}. The results show that the offloading difference ratio is fairly small in the case of the
elastic cloud as well as in the case of the non-elastic cloud. As the number of MUs increases, the offloading difference ratio increases too, which explains the
increased cost ratio observed in Figure~\ref{fig:total_cost}, as more offloaders reduce the achievable rate, which in turn leads to increased costs.
The observation that the number of MUs that offload is higher in equilibrium than in the optimal solution
is consistent with the theory of the tragedy of the commons in economic theory~\cite{Hardin1968Science}.

\subsection{Computational Complexity}
In order to evaluate the computational complexity of the proposed algorithms, we study the number of iterations needed
to compute the strategy profile $\vardecisionsvector^{*}$ for three  scenarios with $\varAPsdim=10,50,100$ APs, respectively. 
For the elastic cloud the number of iterations is the number of update steps, while for the non-elastic cloud the number of iterations is the sum of the update steps over all induction steps.
Figure~\ref{fig:num_iter} shows the number of iterations as a function of the number of MUs. For the non-elastic cloud we consider two orderings of adding MUs: in the first case the MUs are added in random order, while in the second case the  MUs enter the game in increasing order of their ratio $\frac{\vardatasize_{\varplayer}}{\varlocalcost_{\varplayer}\varCPUcyclesnumber_{\varplayer}}$.
In both cases we use the same simulation scenarios in order to compute the number of the iterations. Intuitively, we can expect 
that the second case results in a smaller number of the iterations, since the MUs with lower $\frac{\vardatasize_{\varplayer}}{\varlocalcost_{\varplayer}\varCPUcyclesnumber_{\varplayer}}$ ratio 
have higher computational capability to execute computationally more demanding tasks with smaller offloading data size than the MUs with higher $\frac{\vardatasize_{\varplayer}}{\varlocalcost_{\varplayer}\varCPUcyclesnumber_{\varplayer}}$ ratio.
However, the simulation results show that the number of iterations is fairly insensitive to the order of adding the MUs and mostly depends on the number of MUs $\varplayerssetdim$. This insensitivity allows for the implementation of a very low-overhead decentralized solution, as the coordinator need not care about the order in which the MUs are added for computing the equilibrium allocation.

\section{Related Work}
\label{sec::related}

There is a significant body of works that deals with the design of energy efficient computation
offloading for a single mobile user~\cite{Cuervo2010MobiSys,Wen12Infocom,Barbera2013Infocom,Kumar2013MobNet,Rudenko:1998:SPC:584007.584008,Huang2012twc,Hyytia2015wowmom}.
The experimental results in~\cite{Rudenko:1998:SPC:584007.584008} showed that significant battery power savings can be achieved
by computation offloading.~\cite{Barbera2013Infocom} studied the commmunication overhead of computation offloading and the impact of bandwidth availability on an experimental platform.
~\cite{Cuervo2010MobiSys} proposed a code partitioning solution for fine-grained energy-aware computation offloading.
~\cite{Huang2012twc} proposed an algorithm for offloading partitioned code under bandwidth and delay constraints.
~\cite{Wen12Infocom} proposed CPU frequency and transmission power adaptation for energy-optimal computation offloading under delay constraints.
~\cite{Hyytia2015wowmom} modeled the offloading problem under stochastic task arrivals as a Markov decision process and provided a near-optimal offloading policy.

A number of recent works considered the problem of joint energy minimization for multiple mobile users~\cite{Yang:2013:FPE:2479942.2479946,Rahimi2013Cloud,Sardellitti2015tsipn}. 
~\cite{Yang:2013:FPE:2479942.2479946} studies computation partitioning for streaming data processing with the aim of maximizing throughput, considering sharing of 
computation instances among multiple mobile users, and  proposes a genetic algorithm as a heuristic for solving the resulting optimization problem.
~\cite{Rahimi2013Cloud} models computation offloading to a tiered cloud infrastructure under user mobility in a location-time workflow framework, and proposes a heuristic for minimizing the
users' cost.
~\cite{Sardellitti2015tsipn} aims at minimizing the mobile users' energy consumption by joint allocation of radio resources and cloud computing power, and provides an iterative
algorithm to find a local minimum of the optimization problem. 

A few recent works provided a game theoretic treatment of computation offloading in a game theoretical setting~\cite{Wang2013Sose,Cardellini2015,Chen2015tpds,Chen2015ToN,Meskar2015ICC,Ma:2015:GAC:2811587.2811598}.
~\cite{Wang2013Sose} considers a two-stage problem, where first each mobile user decides what share of its task to offload so as to minimize its energy consumption and to meet its delay deadline,
and then the cloud allocates computational resources to the offloaded tasks.
~\cite{Cardellini2015} considers a two-tier cloud infrastructure and stochastic task arrivals and proves the existence of equilibria and provides an algorithm for computing and equilibrium.
~\cite{Meskar2015ICC} considers tasks that arrive simultaneously, a single wireless link, and elastic cloud, and show the existence of equilibria when all mobile users have the same delay budget.
Our work differs from~\cite{Wang2013Sose} in that we consider that the allocation of cloud resources is known to the mobile users, from~\cite{Cardellini2015} in that we take into account
contention in the wireless access, and from~\cite{Meskar2015ICC} in that we consider multiple wireless links and a non-elastic cloud.

Most related to our work are the problems considered in~\cite{Chen2015tpds,Chen2015ToN,Ma:2015:GAC:2811587.2811598}.~\cite{Chen2015tpds} considers contention on a single wireless link and an elastic cloud, assumes upload rates to be determined by the Shannon capacity of an interference channel, and shows that the game is a potential game.
~\cite{Chen2015ToN} extends the model to multiple wireless links and shows that the game is still a potential game. Unlike these works, we consider fair bandwidth sharing and consider the case of non-elastic cloud.
~\cite{Ma:2015:GAC:2811587.2811598} considers multiple wireless links, fair bandwidth sharing and a non-elastic cloud, and claims the game to have an exact potential.
In our work we on the one hand extend the model to an elastic cloud, on the other hand we show that an exact potential cannot exist in case of a non-elastic cloud, but at the same time we prove the existence of an equilibrium allocation, provide an efficient algorithm with quadratic complexity for computing one, and provide a bound on the price of anarchy.

Besides providing efficient algorithms for computing equilibria, the importance of our contribution lies in the fact that while games with an elastic cloud are player-specific singleton congestion games for which the existence of equilibria is known~\cite{Milchtaich1996111}, the non-elastic cloud model does not fall in this category of games and thus no general equilibrium existence result exists.

\section{Conclusion}
\label{sec::conclusion}

We have considered the problem of computation offloading in a multi-access wireless network
by self-interested mobile users for mobile cloud computing, for the case of elastic and non-elastic cloud resources.
We provided a game theoretical formulation of the problem, and showed that in the case of
an elastic cloud a simple algorithm, in which users iteratively improve their allocations,
can be used for computing an equilibrium. We showed that the same algorithm may fail in the case of a
non-elastic cloud, but also showed that an equilibrium always exists, and provided an algorithm for computing an equilibrium
with quadratic complexity. Finally, we provided a bound on the price of anarchy. Simulation results show
that the complexity bound is not tight, and the proposed algorithm scales better than quadratic in terms of the
number of users, and the obtained equilibria provide good system performance.

\bibliographystyle{IEEEtran}
\bibliography{IEEEabrv,refs-draft}  
\end{document}